\pgfplotsset{compat=newest}
\definecolor{mylightyellow}{rgb}{1,1,.8}
\definecolor{mylightgreen}{rgb}{.8,1,.8}
\definecolor{mydarkred}{RGB}{178,34,34}
\definecolor{mydarkgreen}{RGB}{34,139,34}
\definecolor{mydarkblue}{RGB}{72,61,139}
\definecolor{mydarkyellow}{RGB}{218,165,32}
\definecolor{Bcolor}{rgb}{0,0.7,0}
\definecolor{Gcolor}{rgb}{1,0,0.5}
\newtheorem{Proposition}{Proposition}[part]
\newtheorem{Lemma}{Lemma}[part]
\newtheorem{Remark}{Remark}[part]
\newcommand\footnoteref[1]{\protected@xdef\@thefnmark{\ref{#1}}\@footnotemark}
\newcommand{\Ex}[2]{\mathbb{E}_{#1}\!\left[\,#2\,\right]}
\newcommand{\ExC}[3]{\mathbb{E}_{#1}\!\left[\left.\,#2\,\right|\,#3\,\right]}
\newcommand{\ind}[1]{1_{\{#1\}}}
\DeclareMathOperator{\Tr}{Tr}
\newcommand{\trace}[1]{{\Tr\left[{#1}\right]}}
\begin{document}

\title{A general framework\\for a joint calibration of VIX and VXX options}

\author[a,b]{Martino Grasselli\thanks{grassell@math.unipd.it}}
\author[a]{Andrea Mazzoran\thanks{andreamazzoran@hotmail.it}}
\author[c,d]{Andrea Pallavicini\thanks{andrea.pallavicini@intesasanpaolo.com}}

\affil[a]{\small Department of Mathematics, University of Padova, Via Trieste 63 Padova 35121, Italy.}
\affil[b]{\small Devinci Research Center, L\'{e}onard de Vinci P{\^ o}le Universitaire, 92 916 Paris La D\'{e}fense, France.}
\affil[c]{\small Department of Mathematics, Imperial College, London SW7 2AZ, United Kingdom.}
\affil[d]{\small Financial Engineering, Intesa Sanpaolo, largo Mattioli 3 Milano 20121, Italy.}

\date{\today}

\maketitle

\begin{abstract}
We analyze the VIX futures market with a focus on the exchange-traded notes written on such contracts, in particular we investigate the VXX notes tracking the short-end part of the futures term structure. Inspired by recent developments in commodity smile modelling, we present a multi-factor stochastic-local volatility model that is able to jointly calibrate plain vanilla options both on VIX futures and VXX notes, thus going beyond the failure of purely stochastic or simply  local volatility models. We discuss  numerical  results on real market data by highlighting the impact of model parameters on implied volatilities.
\end{abstract}

\bigskip

\noindent \textbf{JEL classification codes:} C63, G13. \medskip\\
\noindent \textbf{AMS classification codes:} 65C05, 91G20, 91G60. \medskip\\
\noindent \textbf{Keywords:} Local volatility, Stochastic volatility, VIX, VIX futures, VXX.


\section{Introduction}

In recent years, it has become increasingly common to consider volatility as its own asset class. As derivatives on the VIX index, created by the CBOE\footnote{See the white paper of CBOE in 2014 titled \textit{CBOE Volatility Index}, available at \url{https://www.cboe.com/micro/vix/vixwhite.pdf.}} in 1993, may be inaccessible to non-institutional players, mainly due to the large notional sizes of the contracts as currently designed, some exchange-traded notes (ETN) on the VIX were introduced. In $2009$, Barclays launched the first two ETN's on the VIX: VXX\footnote{\label{footnote1}See the VXX Prospectus of Barclays available at \url{https://www.ipathetn.com/US/16/en/instruments.app?searchType=text&dataType=html&searchTerm=VXX}} and VXZ. The VXX ETN was the first exchange-traded product (ETP) on VIX futures, issued shortly after the inception of the VIX futures indices. The VXX is a non-securitized debt obligation, similar to a zero-coupon bond, but with a redemption value that depends on the level of the S\&P 500 (SPX) VIX Short-Term Futures Total Return index (SPVXSTR). The SPVXSTR tracks the performance of a position in the nearest and second-nearest maturing VIX futures contracts, which is rebalanced daily to create a nearly constant $1$-month maturity. In other words, VXX mimics  the behaviour of the VIX, namely of the 30-day forward-looking volatility, using a replicable strategy based on futures on VIX.  Since $2009$ then, ETN's on the VIX flourished: there currently exist more than thirty of them, with several billion dollars in market caps and daily volumes (see~\cite{alexander2013volatility} for a comprehensive empirical study on VIX ETN's).

There is a considerable literature for the VIX, and attempts to model VIX can be divided in two strands: on the one hand, the consistent-pricing approach models the joint (risk-neutral) dynamics of the S\&P 500 and the VIX with, most often, the aim of pricing derivatives on the two indices in a consistent manner, on the other hand, the stand-alone approach directly models the dynamics of the VIX.

In the former case, we can include the work of~\cite{cont2013consistent} and references therein, while more applications can be found in the work of~\cite{gehricke2018modeling}, where they propose a consistent framework for S\&P 500 (modelled with a Heston-like specification) and the VIX derived from the square root of the variance swap contract. On top of that, they construct the VXX contract from the Prospectus and show that the roll yield of VIX futures drives the difference between the VXX and VIX returns on time series. More recently, in the paper of~\cite{gatheral2020quadratic}, a joint calibration of SPX and VIX smile has been  successfully obtained using  the quadratic rough Heston model, while~\cite{guyon2020joint}  builds up a non-parametric discrete-time model that jointly and exactly calibrates to the prices of SPX options, VIX futures and VIX options. We also mention the work of~\cite{guyon2020inversion}, where the author investigates conditions for the existence of a continuous model on the S\&P 500 index (SPX) that calibrates the full surface of SPX and VIX implied volatilities, and the paper of~\cite{de2015linking}, where the authors bound VIX options from vanilla options and VIX futures, which leads to a new martingale optimal transport problem that they solve numerically.

In the latter case, the first stand-alone model for the VIX was proposed by Whaley, see~\cite{whaley1993derivatives}, where he models the spot VIX as a geometric Brownian motion, hence enabling to price options on the VIX in the Black and Scholes framework, but failing to capture the smile effect and the mean-reversion of volatility processes, which is necessary in order to fit the term structure of futures prices (\cite{schwert1990stock}, \cite{pagan1990alternative} and~\cite{schwert2011stock}). We cite also the attempt of~\cite{Drimus2013} based on local volatility. Since then, numerous stand-alone models have been proposed for the VIX. We refer to~\cite{bao2012pricing} and~\cite{lin2013vix} for a detailed review on that strand of literature.

The literature on VXX is limited. To our knowledge, there are only a few theoretical studies which propose a unified framework for VIX and VXX. Existing studies on the VXX (\cite{bao2012pricing}) typically model the VXX in a stand-alone manner, which does not take into account the link between VIX and VXX. As a consequence, the loss for VXX when the VIX is in contango cannot be anticipated in such models. In the recent paper of~\cite{gehricke2020implied}, they analyze the daily implied volatility curves of the VXX options market alone (where they consider the implied volatility as a function of the moneyness through quadratic polynomial regressions), providing a necessary benchmark for developing a VXX option pricing model. We also mention the work of~\cite{grasselli2018vix}, where the authors link the properties of VXX to those of the VIX in a tractable way. In particular, they quantify the systematic loss observed empirically for VXX when the VIX futures term-structure is in contango and they derive option prices, implied volatilities and skews of VXX from those of VIX in infinitesimal developments. The affine stochastic volatility  model of~\cite{grasselli2018vix} exploits the FFT methodology in order to price futures and options on VIX, while a Monte Carlo simulation is performed in order to compare the resulting smile on the VXX implied by the calibrated model with the one provided by real market data. The results are not completely satisfactory, which is not surprising, since the VXX can be seen as an exotic path-dependent product on VIX, so that the information regarding the smile on vanillas (on VIX) are well known to be not enough to recover the whole information concerning path dependent products in a purely stochastic volatility model (see for instance~\cite{bergomi2015stochastic}).

In the present paper we extend the pure stochastic volatility model for the VIX and we show that  a stochastic local-volatility model (SLV) is able to jointly calibrate VIX and VXX options in a consistent manner. The same methodology can be easily extended to other pairs of futures contracts and notes. SLV models are the de-facto industry standard in option pricing since they allow to join the advantages of stochastic volatility models with a good fit of plain options. They were first introduced in~\cite{Lipton2002}, while a first efficient PDE-based calibration procedure was presented in~\cite{Ren2007}. In the present paper, since we investigate multi-factor models, we follow the calibration approach of~\cite{guyon2012being}, which is based on a Monte Carlo simulation. The SLV model we introduce for the VIX futures starts from the work of~\cite{nastasi2020smile}, where the problem of modelling futures smiles for commodity assets in a flexible but parsimonious way is discussed. In particular, here we investigate a multi-factor extension of such model which is tailored to describe VIX futures prices, and to allow a joint calibration of VIX and VXX plain vanilla options.

The key feature of our model is the possibility to decouple the problems of calibrating VIX plain vanilla options, which are recovered by a suitable choice of the local-volatility function, from the more complex task of calibrating the plain vanilla options on the VXX notes. We recall that in our approach the VXX notes are exotic path-dependent product written on VIX futures. We solve the second problem by introducing a local-correlation function between the factors driving the futures prices, and by properly selecting the remaining free dynamics parameters. The calibration of the local correlation function is inspired by the work of~\cite{Guyon2017}. Quite remarkably, we find that the stochastic component of the volatility is not redundant, namely we need a truly stochastic-local volatility model in order to be able to fit both VIX and VXX markets using real market quotes.

\medskip

The paper is organized as follows. In Section~\ref{localvol} we give a brief overview of SLV models, highlighting their properties and the main advantages in using these models. In Section~\ref{ETPstrategies}, we present the general setting of our model. We start from the evolution of a general ETP and their underlying futures contracts, then  we describe how to calibrate futures prices and options on futures. In Section~\ref{An Example with VIX and VXX} we adapt the general framework of the previous section to the case of the VXX ETN strategy. In Section~\ref{Numerical Results} a calibration exercise based on real data along with numerical results and comparisons are presented, involving a joint fit and a parameters sensitivity analysis. In Section~\ref{conclusion} we draw some conclusion and remarks.

\medskip

Needless to say, the views expressed in this paper are those of the authors and do not necessarily represent the views of their institutions.

\section{Stochastic local-volatility models}
\label{localvol}
 
One of the main advantages of using local-volatility models (LV) is their natural modelling of plain-vanilla market volatilities. Indeed, a LV model can be calibrated with extreme precision to any given set of arbitrage-free European vanilla option prices. LV model were first introduced by~\cite{dupire1994pricing} and~\cite{derman1994riding}. Although well-accepted, LV models have certain limitations, for example, they generate flattening implied forward volatilities, see~\cite{rebonato1999volatility}, that may lead to a mispricing of financial products like forward-starting options.

On the other hand, stochastic volatility (SV) models, like the well known Heston model, see~\cite{heston1993closed}, are considered to be more accurate choices for pricing forward volatility sensitive derivatives, see~\cite{gatheral2011volatility}. Although the SV models have desired features for pricing, they often cannot be very well calibrated to a given set of arbitrage-free European vanilla option prices. In particular, the accuracy of the Heston model for pricing short-maturity options in the equity market is typically unsatisfactory.

One  possible improvement to the previous issues is considering stochastic local-volatility models (SLV), that take advantages of both LV and SV models properties. SLV models, going back to~\cite{Jex1999} and~\cite{Lipton2002}, are still an intricate task, both from a theoretical as well as a practical point of view. The main advantage of using SLV models is that one can achieve both a good fit to market series data and in principle a perfect calibration to the implied volatility smiles. In such models the discounted price process $(S_t)_{t \geq 0}$ of an asset follows the stochastic differential equation (SDE) 
\begin{equation}
\label{SLV price dyanmics}
    d S_t = S_t \ell(t, S_t) \sigma_t \,dW_t,
\end{equation}%
where $(\sigma_t)_{t \geq 0}$ is a process taking positive values, and $\ell(t, K)$ is a sufficiently regular deterministic function, the so-called leverage function, depending on time and on the current value of the underlying asset. The Brownian motion $W_t$ is possibly correlated with the noise driving the process $\sigma$. Obviously, SV and LV models are recovered by setting $\ell(t,K) \equiv 1$ or $\sigma \equiv 1$, respectively. At this stage, the stochastic volatility process (which in our specification is given by the product $(\ell \sigma_t)_{t \geq 0}$) can be very general.

The leverage function $\ell(t,K)$ is the crucial part in this model. It allows in principle to perfectly calibrate the market implied volatility surface. In order to achieve this goal, $\ell$ needs to satisfy the following relation:
\begin{equation}
\label{leverage computation}
    \ell^2(t,K) = \frac{\eta^2(t,K)}{\ExC{0}{\sigma_t^2}{S_t=K }},
\end{equation}%
where $\eta$ denotes the local-volatility function, see~\cite{dupire1994pricing}. The familiar reader would recognize in Equation~\eqref{leverage computation} an application of the Markov-projection procedure, a concept originating from the celebrated Gy{\"o}ngy Lemma, see~\cite{gyongy1986mimicking}, whose idea basically lies in finding a diffusion which ``mimics'' the fixed-time marginal distributions of an It{\^o} process. 

For the derivation of Equation~\eqref{leverage computation}, we refer to~\cite{guyon2013nonlinear}. Notice that Equation~\eqref{leverage computation} is an implicit equation for $\ell$ as it is needed for the computation of $\mathbb{E}\left[\sigma_t^{2} \mid S_t=K\right]$. It is a standard thing in SLV models and this in turn means that the SDE for the price process $(S_t)_{t \geq 0}$ is actually a McKean-Vlasov SDE, since the probability distribution of $S_t$ enters the characteristics of the equation. Existence and uniqueness results for this equation are not obvious in principle, as the coefficients do not typically satisfy standard conditions like for instance Lipschitz continuity with respect to the so-called Wasserstein metric. In fact, deriving the set of stochastic volatility parameters for which SLV models exist uniquely, for a given market implied volatility surface, is a very challenging and still open problem. In the sequel, we will limit to investigate numerically the validity of the assumption just stated in our specific framework.

\section{Smile modelling for ETP on futures strategies}
\label{ETPstrategies}

We start this section with a wider look to study the volatility smile of a generic ETP based on futures strategies. Then, the results shown in this section will be applied to our specific VXX case in Section~\ref{An Example with VIX and VXX}.

\subsection{ETP on futures strategies}

We analyze an ETP based on a strategy of $N$ futures contracts $F_t^i$ with maturity dates $T_i$ and $i=1,\ldots,N$. If we assume that the ETP strategy itself is not collateralized and it requires a proportional fee payment $\phi_t$, we can write the strategy price process $V_t$ as given by
\begin{equation}
\label{VXX dynamics explicit}
    \frac{dV_t}{V_t} = (r_t - \phi_t) dt + \sum_{i=1}^N \omega_t^i \, \frac{dF_t^i}{F_t^i},
\end{equation}%
where $r_t$ is the risk-free rate. Moreover, we assume that the futures contracts and the ETP are expressed in the same currency as the bank account  based on $r_t$. The processes $\omega^i_t$ represent the investment percentage (or weights) in the futures contracts, and they are defined by the ETP term-sheet. Notice that the weights may depend on the whole vector of futures prices. It is straightforward to extend the analysis to more general ETP strategies based on total-return or dividend-paying indices. Moreover, we can introduce securities in foreign currencies as in~\cite{moreni2017derivative}.

We can consider as a first example the family of S\&P GSCI commodity indices. Exchange-Traded Commodity (ETC's) tracking these indices are liquid in the market, and they represent an investment in a specific commodity or commodity class. The investment is in the first nearby one-month future, but for the last five days of the contract, where a roll over procedure is implemented to sell the first nearby and to buy the second one. A second example are the volatility-based ETN's VXX and VXZ. They represent respectively an investment in short-term or in the long-term structure of VIX futures. The investment is in a portfolio of futures mimicking a rolling one-month contract. In Section~\ref{An Example with VIX and VXX} we will focus on the VXX case.

\subsection{Modeling the ETP smile}

We wish to jointly model the volatility smile of an ETP and of its underlying futures contracts. Indeed, market quotes for ETP plain-vanilla options can be found on the market. The main problem to solve is the fact that the price process of the ETP strategy is non-Markov, since it depends on the futures contract prices, namely the vector process $[V_t,F_t^1, \dots, F_t^N]$ is Markov. 

We start by introducing a generic stochastic volatility dynamics for futures prices under the risk-neutral measure, as given by
\begin{equation}
\label{SLV model for F}
    dF_t^i = \nu_t^i \cdot dW_t,
\end{equation}%
where $\nu_t^i$ are vector processes in $\mathbb{R}^d$, possibly depending on the futures prices themselves, and $W_t$ is a $d-$vector of standard Brownian motions under the risk-neutral measure. The notation $a \cdot b$ refers to the inner product between $a$ and $b$ (vectors or matrices).

In general, we can numerically solve together Equations~\eqref{VXX dynamics explicit} and~\eqref{SLV model for F} to calculate options prices on the ETP or on the futures contracts. Our aim is at calibrating both the futures and the ETP quoted smile. In the following, we always assume that interest rates and fees are deterministic functions of time, so that we write: $r_t:=r(t)$ and $\phi_t:=\phi(t)$. 

We can simplify the problem by noticing that we need only to know the marginal densities of the futures and ETP prices, so that we can focus our analysis on the Markov projection of the dynamics~\eqref{VXX dynamics explicit} and~\eqref{SLV model for F}. In this way, the model can  match the marginal densities of futures prices and ETP prices predicted by the local-volatility model.

\begin{Proposition}{\bf (Markov projections)}
\label{Prop: Markov projections}
    Given the dynamics~\eqref{SLV model for F} and~\eqref{VXX dynamics explicit}, the Markov projections of the processes $F_t^i$ are given by
    \begin{equation}
    \label{Futures Markov projection}
        d \widetilde{F}_t^i = \eta_F^i(t, \widetilde{F}_t^i) \,d\widetilde{W}_t^i
        \;,\quad
        \widetilde{F}_0^i = F_0^i
    \end{equation}%
    where $\widetilde{W}_t^i$ with $i=1, \dots, N$ are standard Brownian motions under the risk-neutral measure, $F_0^i$ are the futures prices as observed today in the market, and
    \begin{equation}
    \label{Futures Markov constraint}
        \eta_F^i(t, K) \doteq \sqrt{\ExC{0}{ \nu_t^i \cdot \nu_t^i}{F_t^i=K}},
    \end{equation}%
    while the Markov projections of the process $V_t$ is given by
    \begin{equation}
    \label{VXX Markov projection}
        \frac{d \widetilde{V}_t}{\widetilde{V}_t}= (r_t-\phi_t) \,dt + \eta_V(t, \widetilde{V}_t ) \,d\widetilde{W}_t^0
        \;,\quad
        \widetilde{F}_0^i = V_0
    \end{equation}%
    where $\widetilde{W}_t^0$ is a standard Brownian motion, $V_0^i$ is the ETP price as observed today in the market, and
    \begin{equation}
    \label{VXX Markov constraint}
        \eta_V(t,K) \doteq \sqrt{\ExC{0}{ \sum_{ij=1}^N \hat{\omega}_t^i \nu_t^i \cdot \nu_t^j \hat{\omega}_t^j }{V_t=K}},
    \end{equation}%
    with the weights $\hat{\omega}_t^i$ given by
    \begin{equation}
    \label{VXX tilde weights}
        \hat{\omega}_t^i \doteq \frac{\omega_t^i}{F_t^i}.
    \end{equation}%
\end{Proposition}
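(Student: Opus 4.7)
My plan is to derive both Markov projections by a direct application of the Gy\"ongy lemma (the same tool alluded to in Section~\ref{localvol} after~\eqref{leverage computation}), applied separately to each futures price $F_t^i$ and then to the ETP value $V_t$. Since the projections are one-dimensional in each case, no coupling between the $\widetilde F_t^i$'s or between $\widetilde V_t$ and the $\widetilde F_t^i$'s is needed: one only asks that the time-$t$ marginal law of each projected process matches that of the original.

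\textbf{Futures projections.} The dynamics~\eqref{SLV model for F} are already in driftless It\^o form with instantaneous quadratic variation $\nu_t^i\cdot\nu_t^i$. Gy\"ongy's lemma then produces a one-dimensional diffusion $\widetilde F_t^i$ with the same time-$t$ marginals, whose diffusion coefficient is the square root of the conditional expectation of this instantaneous variance given $F_t^i=K$. This gives~\eqref{Futures Markov projection} with $\eta_F^i$ as in~\eqref{Futures Markov constraint}, and the initial condition matches today's observed price because Gy\"ongy's projection preserves the time-$0$ marginal (a Dirac mass at $F_0^i$).

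\textbf{ETP projection.} My first move is to substitute~\eqref{SLV model for F} into~\eqref{VXX dynamics explicit}, invoking the definition~\eqref{VXX tilde weights} of $\hat\omega_t^i$, so that $V_t$ is an It\^o process with deterministic drift $V_t(r_t-\phi_t)$ and vector diffusion coefficient $V_t\sum_i\hat\omega_t^i\nu_t^i$; its instantaneous variance is $V_t^2\sum_{ij}\hat\omega_t^i\hat\omega_t^j\,\nu_t^i\cdot\nu_t^j$. Applying Gy\"ongy's lemma to $V_t$, the deterministic drift passes through unchanged (its conditional expectation given $V_t=K$ is itself). Matching the variance of the projected geometric SDE $d\widetilde V_t/\widetilde V_t = (r_t-\phi_t)\,dt + \eta_V(t,\widetilde V_t)\,d\widetilde W_t^0$ to the conditional expectation of the variance of $V_t$ given $V_t=K$ cancels the common factor $K^2$ on both sides and pins down $\eta_V$ exactly as in~\eqref{VXX Markov constraint}, giving~\eqref{VXX Markov projection}.

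\textbf{Main obstacle.} The algebra above is essentially mechanical; the genuine delicacy lies in the applicability of the Gy\"ongy lemma itself. One needs enough integrability of the $\nu_t^i$ and of the weighted products $\hat\omega_t^i\hat\omega_t^j\nu_t^i\cdot\nu_t^j$ for the conditional expectations in~\eqref{Futures Markov constraint} and~\eqref{VXX Markov constraint} to be well defined, and sufficient regularity of the resulting coefficients $\eta_F^i$ and $\eta_V$ to ensure weak existence of the projected SDEs~\eqref{Futures Markov projection} and~\eqref{VXX Markov projection}. These are precisely the McKean--Vlasov-type issues explicitly flagged at the end of Section~\ref{localvol}, and the paper treats them as standing assumptions on the model specification rather than proving them from first principles.
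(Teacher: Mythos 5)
Your proposal is correct and follows essentially the same route as the paper: apply Gy\"ongy's lemma componentwise to the driftless dynamics~\eqref{SLV model for F} to get~\eqref{Futures Markov constraint}, then rewrite~\eqref{VXX dynamics explicit} with the weights $\hat\omega_t^i=\omega_t^i/F_t^i$ and project $V_t$, noting that the deterministic drift $(r_t-\phi_t)$ passes through unchanged. Your closing remarks on integrability and weak existence of the projected SDEs go slightly beyond the paper's proof, which leaves these conditions implicit, but they are consistent with the caveats the authors themselves raise at the end of Section~\ref{localvol}.
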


The main ingredient for proving Proposition~\ref{Prop: Markov projections} is the Gy{\"o}ngy Lemma, see~\cite{gyongy1986mimicking}, that for the sake of clarity we recall here below.

\begin{Lemma}{\bf (Gy{\"o}ngy Lemma)}
    \label{Gyongy Lemma}
        Let $X(t)$ be given by
    \begin{equation}
    \label{SDE general}
        X_t=X_0+\int_0^{t} \alpha(s) \, d s+\int_0^{t} \beta(s) \, d W_s, \quad t \geq 0
    \end{equation}%
    where $W$ is a Brownian motion under some probability measure $\mathbb{P}$, and $\alpha$, $\beta$ are adapted bounded stochastic processes such that~\eqref{SDE general} admits a unique solution. If we define $a(t,x)$, $b(t,x)$ by
    \begin{equation}
        \begin{aligned}
            a(t, x) &= \ExC{0}{ \alpha(t) }{ X(t)=x } \\
            b^{2}(t, x) &= \ExC{0}{ \beta^{2}(t) }{ X(t)=x },
        \end{aligned}
    \end{equation}%
    then there exists a filtered probability space $(\widetilde{\Omega}, \tilde{\mathcal{F}},\{ \tilde{\mathcal{F}} \}_{t \geq 0}, \widetilde{\mathbb{P}})$ where $\widetilde{W}$ is a $\widetilde{\mathbb{P}}-$Brownian motion, such that the SDE
    \begin{equation}
        \begin{aligned}
            Y_t=Y_0+\int_0^{t} a(s,Y(s)) \, d s+\int_0^{t} b(s,Y(s)) \, d \widetilde{W}_s, \quad t \geq 0
        \end{aligned}
    \end{equation}%
    admits a weak solution that has the same one-dimensional distributions as $X$. The process $Y$ is called the Markov projection of the process $X$.    
\end{Lemma}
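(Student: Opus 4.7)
The plan is to prove Gyöngy's Lemma by the classical Fokker--Planck / martingale-problem route: first show that the one-dimensional marginals $\mu_t := \mathrm{Law}(X_t)$ satisfy, in a weak (distributional) sense, the forward equation associated with the candidate Markovian diffusion whose coefficients are $a(t,\cdot)$, $b(t,\cdot)$; then construct a weak solution $Y$ to that Markovian SDE and verify that its marginals coincide with $\mu_t$.

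First I would derive the forward equation satisfied by $\mu_t$. For any $f\in C_c^2(\mathbb{R})$, Itô's formula applied to $f(X_t)$ gives
\begin{equation*}
\mathbb{E}[f(X_t)] \;=\; f(X_0) + \mathbb{E}\!\left[\int_0^t \Bigl(\alpha(s)\,f'(X_s) + \tfrac{1}{2}\beta(s)^2 f''(X_s)\Bigr)\,ds\right].
\end{equation*}
The integrand is bounded (by the boundedness assumption on $\alpha,\beta$ and the compact support of $f$), so Fubini applies, and inside the inner expectation I condition on $X_s$ and use the tower property:
\begin{equation*}
\mathbb{E}[\alpha(s) f'(X_s)] \;=\; \mathbb{E}\bigl[\mathbb{E}[\alpha(s)\mid X_s]\,f'(X_s)\bigr] \;=\; \mathbb{E}[a(s,X_s)\,f'(X_s)],
\end{equation*}
and analogously for the $\beta^2$ term, using the definition of $b^2(s,x)$ as a conditional expectation. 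This yields
\begin{equation*}
\int f\,d\mu_t \;=\; \int f\,d\mu_0 + \int_0^t \!\!\int \Bigl(a(s,x) f'(x) + \tfrac{1}{2} b^2(s,x) f''(x)\Bigr)\,d\mu_s(x)\,ds,
\end{equation*}
which is precisely the weak form of the forward Kolmogorov equation for the operator $\mathcal{L}_s g(x) = a(s,x) g'(x) + \tfrac{1}{2} b^2(s,x) g''(x)$.

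Next I would invoke the Stroock--Varadhan theory of martingale problems: the measurable, bounded (since $\alpha,\beta$ are bounded, so are $a$ and $b$) coefficients $a,b$ give rise to a well-posed or at least solvable martingale problem on the path space, yielding a probability measure $\widetilde{\mathbb{P}}$ on $C([0,\infty);\mathbb{R})$ under which the coordinate process $Y$ satisfies, for every $f\in C_c^2$,
\begin{equation*}
M_t^f \;:=\; f(Y_t) - f(Y_0) - \int_0^t \mathcal{L}_s f(Y_s)\,ds
\end{equation*}
is a martingale, with $Y_0$ distributed as $X_0$. By the standard equivalence (see, e.g., Karatzas--Shreve), this gives a weak solution $(Y,\widetilde{W})$ on an enlarged space to the SDE in the statement. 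The law of $Y_t$ then satisfies the same weak forward equation and the same initial condition as $\mu_t$; uniqueness of solutions to that linear measure-valued equation (which holds under the boundedness conditions, again via a duality / test-function argument) forces $\mathrm{Law}(Y_t)=\mathrm{Law}(X_t)$ for every $t\geq 0$.

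The main obstacle is the last step: the projected coefficients $a(t,\cdot)$, $b(t,\cdot)$ are in general only measurable (not Lipschitz, not even continuous), so neither existence of the weak solution nor uniqueness of marginals can be obtained by classical strong-SDE tools. One has to handle measurability of the conditional expectations defining $a,b$, regularise them (e.g.\ via mollification $a_n, b_n$ with $b_n$ bounded away from zero), solve the regularised martingale problems by Stroock--Varadhan, and then pass to the limit using tightness of the laws on path space together with the identification of the limit as a solution of the original martingale problem. Uniqueness of the marginals, rather than of the laws on path space, is what actually suffices for Gyöngy's conclusion, and this weaker uniqueness follows from the forward-equation argument above without needing uniqueness of the SDE itself.
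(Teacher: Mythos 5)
The paper does not actually prove this lemma: it is recalled verbatim from the literature with a citation to Gy\"ongy (1986) and is used as a black box in the proof of Proposition~3.1, so there is no in-paper argument to compare yours against. Judged on its own terms, your outline follows the standard modern route (weak forward equation for the marginals of $X$, then a Markovian diffusion with the projected coefficients), and the first half is sound: the It\^o/Fubini/tower-property computation showing that $\mu_t=\mathrm{Law}(X_t)$ solves the weak Fokker--Planck equation for $\mathcal{L}_s g = a\,g' + \tfrac12 b^2 g''$ is correct and is indeed the natural starting point.

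The genuine gap is in the last step. You construct \emph{some} weak solution $Y$ of the projected SDE via a martingale problem and then claim $\mathrm{Law}(Y_t)=\mu_t$ because the linear measure-valued forward equation has a unique solution ``under the boundedness conditions, via a duality / test-function argument.'' That uniqueness claim is false in the stated generality: for merely measurable, bounded coefficients with possibly degenerate $b$, the Fokker--Planck equation can admit several measure-valued solutions, and the duality argument you allude to requires solving the backward equation classically, which needs uniform ellipticity and some continuity of the coefficients. Gy\"ongy's original proof assumes non-degeneracy ($\beta^2\ge\varepsilon>0$, an assumption the paper's statement silently drops) and proceeds through Krylov's estimates and a mollification of the \emph{original} process, not of the projected coefficients; the degenerate case requires the substantially harder machinery of Brunick--Shreve. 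The clean way to close your argument without uniqueness is to reverse the logic: use the superposition principle (Figalli, Trevisan) to assert that the specific solution $(\mu_t)_t$ of the forward equation \emph{is itself} the marginal flow of some solution of the martingale problem for $\mathcal{L}_s$, which directly yields the desired weak solution $Y$ with $\mathrm{Law}(Y_t)=\mu_t$. As written, your proof establishes that $X$'s marginals solve the right PDE and that a candidate $Y$ exists, but not that the two have the same one-dimensional distributions.
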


\begin{proof} {\it (Proposition~\ref{Prop: Markov projections})~}
    Equation~\eqref{Futures Markov projection} follows from equation~\eqref{SLV model for F} and from Lemma~\ref{Gyongy Lemma}. Dynamics~\eqref{VXX dynamics explicit} can be rewritten as 
    \begin{equation}
        \frac{dV_t}{V_t} = (r_t - \phi_t) dt + \sum_{i=1}^{N} \omega_t^i \: \frac{\nu_t^i \cdot d W_t}{F_t^i},    
    \end{equation}%
    and applying Lemma \ref{Gyongy Lemma} we get 
    \begin{equation}
        \eta_V(t,K)^2 = \ExC{0}{ \sum_{i,j=1}^N \frac{\omega_t^i \nu_t^i}{F_t^i} \cdot \frac{\nu_t^j \omega_t^j}{F_t^j} }{ V_t=K },
    \end{equation}%
    while the drift part remains unaffected since both $r_t$ and $\phi_t$ are deterministic functions. The conclusion follows immediately.
\end{proof}

\subsection{Calibration of the local volatility functions}

The functions $\eta_F^i(t,K)$ and $\eta_V(t,K)$ are respectively the futures and ETP local-volatility functions. We can calibrate the model on plain-vanilla market quotes using the approach of~\cite{nastasi2020smile} originally implemented for the commodity market, that is $\eta_F^i(t,K)$ will be calibrated by using options on futures, while $\eta_V(t,K)$ by using options on the ETP strategy. In this approach the Markov projections $\widetilde{F}^i_t$ are modelled in term of a common process $s_t$, which we can identify with the price of a rolling futures contract.

We briefly describe the calibration procedure. We stress that the following assumptions on the dynamics of  $\widetilde{F}^i_t$ are used only to obtain a simple framework to calibrate the local volatility functions, and they do not pose any constraint on the futures price dynamics $F^i_t$.

We start by defining the futures local volatilities as given by
\begin{equation}
    \eta_F^i(t,K) \doteq \left( K - F_0^i \left( 1-e^{-\int_t^{T_i} a(u) \,du} \right) \right) \eta(t, k_F^i(t, K)), \label{local volatility function eta_F}
\end{equation}%
where $a$ is a non-negative function of time, $\eta$ is a positive and bounded function of time and price, while the effective strike $k_F^i(t,K)$ is given by
\begin{equation}
    k_F^i(t,K) \doteq 1 - \left( 1 - \frac{K}{F_0^i} \right) e^{\int_t^{T_i} a(u) \,du}.
\end{equation}%
Then, it is possible to show by a straightforward algebra that the above choice allows us to write
\begin{equation}
    \widetilde{F}^i_t = F_0^i \left( 1 - \left( 1 - s_t \right) e^{-\int_t^{T_i} a(u) \,du} \right)
\end{equation}%
where the driving risk process $s_t$ is modelled by a mean-reverting local-volatility dynamics, namely by
\begin{equation}
\label{Normalized spot price dynamics}
    ds_t = a(t) (1 - s_t) \,dt + \eta(t, s_t) s_t \,dW_t^s
    \;, \quad
    s_0 = 1,
\end{equation}%
where $W_t^s$ is a standard Brownian motion under the risk-neutral measure, and we get $W_t^s=\widetilde{W}_t^i$ for all the $i$.

Thus, plain-vanilla options on futures can be calculated as plain-vanilla options on the process $s_t$, which, in turn, can be easily computed by means of the Dupire equation. More precisely, given the dynamics in Equation~\eqref{Normalized spot price dynamics}, we have that the normalized call price 
\begin{equation}
    c(t, k)\doteq\mathbb{E}\left[\left(s_t-k\right)^{+}\right]
\end{equation}%
satisfies the parabolic PDE
\begin{equation}
\label{Extended Dupire equation}
    \partial_t c(t, k)=\left(-a(t)-a(t)(1-k) \partial_{k}+\frac{1}{2} k^{2} \eta^{2}(t, k) \partial_{k}^{2}\right) c(t, k),
\end{equation}%
with boundary conditions
\begin{equation}
\label{Extended Dupire equation Boundary conditions}
c(t, 0)=1, \quad c(t, \infty)=0, \quad c(0, k)=(1-k)^{+}.
\end{equation}%
A more detailed analysis and a proof of this (extended) Dupire equation can be found in~\citet[Proposition~4.1]{nastasi2020smile}. Given the prices of options on futures from the market, we can plug their corresponding normalized call prices in Equation~\eqref{Extended Dupire equation}, and then solve it  for the function $\eta$. Then, we can compute the futures local-volatility function $\eta_F(t,T_i,K)$, thanks to the mapping in Equation~\eqref{local volatility function eta_F},   and eventually recover the prices of options on futures.

There is a vast literature on how to solve Equation \eqref{Extended Dupire equation} with the boundary conditions~\eqref{Extended Dupire equation Boundary conditions}, here we use an implicit PDE discretization method, as usually done for solving the Dupire equation, see for example~\cite{gatheral2011volatility}. We allow a mean-reversion $a(t)$ possibly different from zero when we are calibrating the model to options on futures. On the other hand, options on the ETP spot price can be directly calibrated by a standard local-volatility model, that is without a mean-reversion term. Hence, once $a(t)$ is chosen, we can calibrate to the market $\eta_F^i(t,K)$ and $\eta_V(t,K)$, as previously described.

Thanks to the normalized spot dynamics~\eqref{Normalized spot price dynamics}, we can price all futures options by evaluating the PDE \eqref{Extended Dupire equation}, and ETP options as well. In this way we implicitly obtain the marginal probability densities of futures prices and ETP prices.

\subsection{Implementing the Markov-projection constraints}

We have now to ensure that the two constraints given by the Markov projection~\eqref{Futures Markov constraint} and~\eqref{VXX Markov constraint} hold for the futures price dynamics given by Equation~\eqref{SLV model for F}. More precisely, we need to define a suitable dynamics for the processes $\nu_t^i$ satisfying the constraints, which in turn enable us to preserve the calibration to plain-vanilla options. A possible strategy is given by selecting the vector processes $\nu_t^i$ in the following way.
\begin{equation}
\label{Process nu}
    \nu_t^i \doteq \ell_F^i(t, F_t^i) \, \sqrt{v_t} \, R_i(t, V_t),
\end{equation}%
where the volatility $v_t$ is a scalar positive process to be defined, $\ell^i_F(t,K)$ is the leverage function referred to the futures $F_t^i$, that can be computed as
\begin{equation}
\label{leverage function}
    \ell_F^i(t,K) \doteq \frac{\eta_F^i(t,K)}{\sqrt{\ExC{0}{ v_t }{F^i_t = K}}},
\end{equation}
and the $d$-dimensional vector function $R_i(t,V_t)$ plays the role of local correlation and satisfies the condition $\left\| R_i(t,V_t) \right\|=1$, where $\|\cdot\|$ is the Euclidean norm.

The case of $v_t$ given by a Cox-Ingersoll-Ross process corresponds to a Heston specification for the stochastic volatility. We adopt this choice in the following sections dedicated to the numerical investigations. Another natural candidate for $v_t$ is the Wishart process, which is well known to be able to describe a non trivial dependence among the (positive) diagonal factors, hence going beyond the classic multi-factor Heston model (see e.g. \cite{daf07,daf08}). In Appendix \ref{appendix} we provide a choice for the process $\nu_t^i$ that exploits the properties of the Wishart specification.

With the definition given by Equation~\eqref{Process nu} for the futures volatilities we obtain that the constraint~\eqref{Futures Markov constraint}   is satisfied by construction, so that the model reprices plain-vanilla options on futures correctly. Yet, the constraint~\eqref{VXX Markov constraint} is not automatically satisfied. If we substitute the definition of $\nu_t^i$ into Equation~\eqref{VXX Markov constraint}, we get
\begin{equation}
\label{VXX Markov Projection Solving Equation}
    \eta_V(t, K) = \sqrt{ \sum_{ij=1}^N R_i(t, K) \cdot R_j(t, K) \,\ExC{0}{ v_t \,\hat{\omega}_t^i \hat{\omega}_t^j \ell_F^i(t,F_t^i) \ell_F^j(t,F_t^j) }{ V_t=K } }.
\end{equation}%
Thus, in order to complete the calibration of our model, we need to solve the above equation for an (admissible) unknown local function $R_i(t,K)$. We will investigate numerically in practical cases the validity of this assumption.

\section{The VXX ETN strategy}
\label{An Example with VIX and VXX}

We consider the specific case of the VIX futures and the VXX ETN strategy. 

\subsection{Definition of the ETN strategy}

We start by labelling the futures maturity dates ${\cal T}:=\{T_1,\ldots,T_N\}$ from the first date after the observation date $t$ of the price of the ETN strategy. We define for $j=1,2,\ldots$
\begin{equation}
    T^{(0)} := \max\left\{T \in {\cal T} : T<t\right\}
    \;,\quad
    T^{(j)} := \min\left\{T \in {\cal T} : T>T_{j-1}\right\},
\end{equation}
where we assume that at least one maturity date in $\cal T$ is in the past of $t$. We name $F_t^{(j)}$ the futures contract with maturity $T^{(j)}$. In particular, we have that $F_t^{(1)}$ is the front contract, and $F_t^{(2)}$ is the second contract.

The VXX is defined as a strategy on the first and second nearby VIX futures as described in~\cite{gehricke2018modeling} and given by
\begin{equation}
    \frac{dV_t}{V_t} = (r_t - \phi_t) dt + \hat{\omega}_t^{(1)} \,dF_t^{(1)} + \hat{\omega}_t^{(2)} \,dF_t^{(2)},
\end{equation}%
where the weights are defined as
\begin{equation}
\label{VXX weights}
    \hat{\omega}_t^{(1)} \doteq \frac{\alpha(t)}{\alpha(t) F_t^{(1)}+(1-\alpha(t)) F_t^{(2)}}
    \;,\quad 
    \hat{\omega}_t^{(2)} \doteq \frac{1-\alpha(t)}{\alpha(t) F_t^{(1)}+(1-\alpha(t)) F_t^{(2)}}, 
\end{equation}%
where we define 
\begin{equation}
    \alpha(t) \doteq \frac{\varsigma(T^{(1)},-1) - \varsigma(t,1)}{\varsigma(T^{(1)},-1) - T^{(0)}},
\end{equation}%
and $\varsigma(t,d)$ is a calendar-date shift of $d$ business days.

We notice that the VXX strategy depends on the whole futures term structure, since the futures contracts we identify as front and second contracts change in time. On the other hand, at most two contracts enter the definition at each time, so that the strategy depends only on the correlations between the futures contracts $F^i_t$ and $F^{i+1}_t$ for all $i>0$.

\begin{Remark} {\bf (Definition of the ETP weights.)}
  In~\cite{grasselli2018vix} a slightly different definition is adopted, by using the year fraction between the second and the first nearby futures as reference time interval, instead of the year fraction between the first nearby and the previous futures contract (without calendar adjustments), leading to
  \begin{equation}
    \alpha(t) \doteq \frac{T^{(1)}-t}{T^{(2)}-T^{(1)}}.
  \end{equation}%
\end{Remark}

\subsection{The local correlation}

We now implement the modelling framework of the previous sections by using an explicit form for the local-correlation vector. We set the dimension of the Brownian motion vector $d=N$, where $N$ is the number of the futures contracts. Then, we define the local-correlation vector $R_i(t,K)$ as the $i$-th row of the Cholesky decomposition of the correlation matrix $C(t,K)$ whose elements are given by
\begin{equation}
\label{R vector parametrization}
    C_{ij}(t, K) \doteq \ind{i=j} + \ind{i\neq j} \rho(t, K),
\end{equation}%
where the scalar coefficient $\rho(t,K)$ has values in the range $[-1,1]$. The above definition satisfies $\left\| R_i(t,K) \right\| = 1$ for all $i$ by construction. Moreover, for any $i\neq j$ we get also $R_i(t,K) \cdot R_j(t,K) = \rho(t,K)$.

The particular expression we use for the local-correlation vector leads to simple calibration formulae since it allows us to explicitly calculate the coefficient $\rho(t,K)$ by solving Equation~\eqref{VXX Markov Projection Solving Equation}. In general, a more elaborated structure can be devised if more information on futures correlations are available in the market. For instance, we could consider the VXZ ETN to model more futures contracts, or we can resort to a historical analysis. We leave to a future work such analyses.

We can now calculate the coefficient $\rho(t,K)$ by solving Equation~\eqref{VXX Markov Projection Solving Equation}. We describe in the following proposition the results.

\begin{Proposition}
\label{Prop: rho computed VXX}
  Given the parametrization of Equation~\eqref{R vector parametrization}, the local correlation coefficient $\rho(t,K)$ is given by
  \begin{equation}
  \label{Rho equation}
      \rho(t,K) = \frac{\eta_V^2(t,K) - A^{(1)}(t, K)-A^{(2)}(t, K)}{2 A^{(12)}(t,K)},    
  \end{equation}%
  provided that $t$ is not on a futures maturity date, where 
  \begin{equation}
      A^{(1)}(t,K) \doteq \ExC{0}{ \left( \hat{\omega}_t^{(1)} \ell_F^{(1)}(t,F_t^{(1)}) \right)^2 v_t }{ V_t = K },
  \end{equation}%
  \begin{equation}
      A^{(2)}(t,K) \doteq \ExC{0}{ \left( \hat{\omega}_t^{(2)} \ell_F^{(2)}(t,F_t^{(2)}) \right)^2 v_t }{ V_t = K },
  \end{equation}%
  \begin{equation}
      A^{(12)}(t,K) \doteq \ExC{0}{ \hat{\omega}_t^{(1)} \hat{\omega}_t^{(2)} \ell_F^{(1)}(t,F_t^{(1)}) \ell_F^{(2)}(t,F_t^{(2)}) \, v_t }{ V_t = K }.
  \end{equation}%
\end{Proposition}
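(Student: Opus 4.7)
The plan is to substitute the explicit parametrization of Equation~\eqref{R vector parametrization} into the Markov-projection constraint~\eqref{VXX Markov Projection Solving Equation} and then solve linearly for $\rho(t,K)$. The whole content of the proposition reduces to a bookkeeping exercise once one has observed the right structure.

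First, I would note that by Cholesky construction the vectors $R_i(t,K)$ satisfy $R_i \cdot R_i = 1$ for all $i$, and $R_i \cdot R_j = \rho(t,K)$ for $i \neq j$, directly from $C_{ii}=1$ and $C_{ij}=\rho$ off-diagonal. Second, I would use the fact that the VXX weights defined in~\eqref{VXX weights} are supported only on the front and second contracts: $\hat\omega_t^{(j)}=0$ for $j \geq 3$, and the assumption that $t$ is not on a futures maturity date ensures that both $\hat\omega_t^{(1)}$ and $\hat\omega_t^{(2)}$ are well-defined (no discontinuity from the roll) and that effectively only two indices enter the double sum in~\eqref{VXX Markov Projection Solving Equation}.

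Next, I would expand the double sum under these simplifications. The diagonal contributions $i=j$ pick up $R_i \cdot R_i = 1$ and yield exactly the conditional expectations $A^{(1)}(t,K)$ and $A^{(2)}(t,K)$ as defined in the statement. The two off-diagonal contributions (for $(i,j)=(1,2)$ and $(2,1)$) each pick up $R_i \cdot R_j = \rho(t,K)$, and by the symmetry $\hat\omega_t^{(i)} \hat\omega_t^{(j)} \ell_F^{(i)} \ell_F^{(j)}$ under $i \leftrightarrow j$ they together contribute $2\rho(t,K) A^{(12)}(t,K)$. Squaring the left-hand side of~\eqref{VXX Markov Projection Solving Equation} and collecting the pieces gives
\begin{equation*}
    \eta_V^2(t,K) = A^{(1)}(t,K) + A^{(2)}(t,K) + 2\rho(t,K)\, A^{(12)}(t,K),
\end{equation*}
from which the claimed identity~\eqref{Rho equation} follows by solving for $\rho(t,K)$.

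The only delicate point I anticipate is the division by $A^{(12)}(t,K)$: one must argue that, away from the maturity dates, both weights $\hat\omega_t^{(1)}$ and $\hat\omega_t^{(2)}$ are strictly positive, the leverage functions $\ell_F^{(i)}$ are strictly positive by construction in~\eqref{leverage function}, and the stochastic driver $v_t$ is a.s.\ positive (e.g.\ the CIR choice adopted later), so that $A^{(12)}(t,K)>0$ on the support of $V_t$; this is the reason for the exclusion of futures maturity dates in the statement. Apart from this non-degeneracy check, the argument is a direct algebraic consequence of the Cholesky-induced inner products $R_i \cdot R_j$ and the definitions of the $A$-coefficients.
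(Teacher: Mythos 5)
Your proof is correct and follows essentially the same route as the paper: expand the double sum in Equation~\eqref{VXX Markov Projection Solving Equation} using $R_i\cdot R_i=1$ and $R_1\cdot R_2=\rho(t,K)$ from the Cholesky parametrization, identify the diagonal and cross terms with $A^{(1)}$, $A^{(2)}$ and $2\rho A^{(12)}$, and solve linearly for $\rho(t,K)$. Your closing remark on the non-degeneracy of $A^{(12)}(t,K)$ away from futures maturity dates matches the paper's observation that the formula breaks down precisely because a weight vanishes on those dates.
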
%

\noindent In the proposition we name $\ell_F^{(1)}(t,K)$ the leverage function of the front futures contract, and accordingly the one of the second contract. The same we do in the proof where we name $\nu^{(1)}$ the volatility of front futures contract. Notice that when $t$ corresponds to a futures maturity date, Equation~\eqref{Rho equation} becomes meaningless, because the weights vanish.

\begin{proof}
    From the definition of the local-volatility $\eta_V(t,K)$ given by Equations~\eqref{VXX Markov constraint} and~\eqref{VXX Markov Projection Solving Equation}, we get
    \begin{equation}
    \label{Markov projection exploiting}
    \begin{split}
    \eta_V^2(t,K)
    &= \ExC{0}{ \left( \hat{\omega}_t^{(1)} \right)^2 \nu_t^{(1)} \cdot \nu_t^{(1)} + \left( \hat{\omega}_t^{(2)} \right)^2 \nu_t^{(2)} \cdot \nu_t^{(2)}  }{ V_t=K } \\
    &  \quad + 2 \,\ExC{0}{ \hat{\omega}_t^{(1)} \hat{\omega}_t^{(2)} \nu_t^{(1)} \cdot \nu_t^{(2)} }{ V_t=K } \\
    &= \ExC{0}{ \left( \hat{\omega}_t^{(1)} \ell_F^{(1)}(t,F_t^{(1)}) \right)^2 v_t }{ V_t = K } \\
    &  \quad + \ExC{0}{ \left( \hat{\omega}_t^{(2)} \ell_F^{(2)}(t,F_t^{(2)}) \right)^2 v_t }{ V_t = K } \\
    &  \quad + 2 \,\rho(t,K) \,\ExC{0}{ \hat{\omega}_t^{(1)} \hat{\omega}_t^{(2)} \ell_F^{(1)}(t,F_t^{(1)}) \ell_F^{(2)}(t,F_t^{(2)}) \,v_t }{ V_t = K }.
    \end{split}%
    \end{equation}%
    Therefore, when $t$ is not on a futures maturity date, we can obtain $\rho(t,K)$ from Equation~\eqref{Markov projection exploiting} and get the result.
\end{proof}

We can estimate the conditional expectations by means of the techniques developed in~\cite{guyon2012being}, namely approximating conditional expectations using smooth integration kernels. Thus, we can evaluate the expectation of a process $X_t$ given a second process $Y_t$ as
\begin{equation}
    \ExC{0}{ X_t }{ Y_t=K } \approx \frac{\Ex{0}{ X_t \delta^{\varepsilon}\left(Y_t-K\right) }}{\Ex{0}{ \delta^{\varepsilon}\left(Y_t-K\right) }},
\end{equation}%
where $\delta^{\varepsilon}$ is a suitably defined mollifier of the Dirac delta depending on a smoothing coefficient $\varepsilon$. We can use such approximation within the Monte Carlo simulation for $[V_t,F_t^1, \dots ,F_t^N]$ to evaluate the diffusion coefficients of the processes. Alternatively we could use the collocation method developed in~\cite{van2014heston}. We stress again that the existence of a solution for Equation~\eqref{Rho equation} with the above coefficients is an open question, and, moreover, even if the solution exists, we have to check against market quotes if the chosen parametrization is able to produce a local correlation $\rho(t,K) \in [-1,1]$ given the plain-vanilla option prices on futures contract and ETP strategy quoted in the market.

\begin{Remark} {\bf (Alternative correlation definition)}
In order to speed-up the calibration process we can use the following definition of the local-correlation vector exploiting the fact that the VXX ETN depends only on two futures contracts at a time. We set the dimension of the Brownian motion vector $d=2$, and we define the local-correlation vector $R_i(t,K)$ as given by
\begin{equation}
    R_i(t, K) \doteq \ind{i {\rm ~is~odd}} R^{(1)}(t,K) + \ind{i {\rm ~is~even}} R^{(2)}(t,K),
\end{equation}%
where the even and odd contributions are defined as
\begin{equation}
    R^{(1)}(t,K) \doteq \begin{bmatrix} 1 & 0 \end{bmatrix}
    \;,\quad
    R^{(2)}(t,K) \doteq \begin{bmatrix} \rho(t, K) & \sqrt{1 - \rho^2(t, K)} \end{bmatrix},
\end{equation}%
and the scalar coefficient $\rho(t,K)$ has values in the range $[-1,1]$. The above definition satisfies $\left\| R_i(t,K) \right\| = 1$ and $R_i(t,K) \cdot R_{i+1}(t,K) = \rho(t,K)$ for all $i$ by construction. We can select this alternative definition whenever we are only interested in the price of a contract depending on the front and second futures, as in the VXX case.
\end{Remark}

\subsection{Model specification}
\label{Model Specification}

We recall the stochastic local-volatility dynamics for the futures prices defined in Section~\ref{ETPstrategies}, and we select a specific form for the stochastic volatility. From Equations~\eqref{SLV model for F} and~\eqref{Process nu} we have under the risk-neutral measure
\begin{equation}
\label{SLV futures dynamics}
    dF_t^i = \ell_F^i(t,F_t^i) \sqrt{v_t} R_i(t,V_t) \cdot dW_t,
\end{equation}%
where the local-correlation vector $R_i(t,V_t)$ is given by Equatios~\ref{R vector parametrization}. We assume that the process $v_t$ follows a one-dimensional CIR process, namely
\begin{equation}
    dv_t = \kappa (\theta-v_t) \,dt + \xi \sqrt{v_t} \,dZ_t,
\end{equation}%
with initial condition $v_0 = \bar{v}$. The Brownian motion $Z_t$ is correlated with the component of the vector $W_t$. In particular, we select the same correlation for each component, namely $d\langle W^i_t, Z_t \rangle_t = \rho_v \,dt$ with $\rho_v \in [-1,1]$.

In the above equation the positive constants $\kappa$, $\theta$ and $\xi$ are respectively the speed of the mean-reversion of the variance, the long-run average variance, and the volatility-of-volatility parameter. Our one-factor specification for the stochastic volatility will turn out to be enough for our purposes. In fact, the apparently more general multifactor choice does not lead to better results in the quality of the fit, up to our numerical experiences.

Finally, we assume also that the mean reversion $a(t)$ is independent of time to reduce the number of free parameters. Thus, we can collect all the parameters of our model not fixed by the calibration procedure in the following vector
\begin{equation}
\label{Hyper}
    \Theta := \left\{ a, \bar{v}, \kappa, \theta, \xi, \rho_v \right\}.
\end{equation}
These parameters can be fine-tuned so that Equation~\eqref{Rho equation} leads to a local-correlation coefficient in the valid range $[-1,1]$. We will see in the next section how to achieve this goal.

\section{Numerical illustration on a real data set}
\label{Numerical Results}

In this section we perform a numerical exercise, based on real data quoted by the market for options on the VIX futures and the VXX ETN strategy, where we provide a set of admissible parameters, that is leading to an admissible local correlation vector in Equation~\eqref{Rho equation} satisfying $\rho(t,K) \in [-1,1]$. In this way we build up a coherent framework that correctly reprices plain vanilla options on futures contract and VXX strategy quoted in the market. We stress that the set of parameters we are going to find does not necessarily represent the optimal choice, namely we are just  providing an admissible framework, not a full calibration. We leave for future research the complete exploration of this issue.

\subsection{VIX and VXX data sets}

\begin{table}
  \centering
  \begin{tabular}{|c|c|c|c|}
    \hline
    \multicolumn{4}{|c|}{\rule{0pt}{2.5ex} VIX Data Set} \\
    \hline \rule{0pt}{2.5ex} $T^{VIX}$ & $F_0(T)$ & \# options & $K$ (min $-$ max) \\
    \hline
    \multicolumn{1}{|l|}{$T_1$: November $20$, $2019$} & 14.60 & $40$ & $10-80$ \\
    \multicolumn{1}{|l|}{$T_2$: December $18$, $2019$} & 16.15 & $40$ & $10-80$ \\
    \multicolumn{1}{|l|}{$T_3$: January $22$, $2020$} & 17.45 & $35$ & $10-80$ \\
    \multicolumn{1}{|l|}{$T_4$: February $19$, $2020$} & 18.15 & $35$ & $10-80$ \\
    \hline
  \end{tabular}
  \caption{VIX futures call options data set as on $T_0$: November 7, 2019. } 
\label{tab: VIX futures call options data set}
\end{table}

\begin{table}
  \centering
  \begin{tabular}{|c|c|c|}
    \hline
    \multicolumn{3}{|c|}{\rule{0pt}{2.5ex} VXX Data Set} \\
    \hline \rule{0pt}{2.5ex} $T^{VXX}$ & \# options & $K$ (min $-$ max) \\
    \hline
    \multicolumn{1}{|l|}{$T_1$: November 20, 2019} & $65$ & $8-60$ \\
    \multicolumn{1}{|l|}{$T_2$: December 18, 2019} & $59$ & $2-60$ \\
    \multicolumn{1}{|l|}{$T_3$: January 22, 2020} & $53$ & $10-90$ \\
    \hline
  \end{tabular}
  \caption{VXX call options data set as on $T_0$: November 7, 2019. Spot price: $V_0 = 19.22$.} 
\label{tab: VXX call options data set}
\end{table}

We test our model against a set of three maturities for VXX call options with a corresponding set of four maturities for VIX futures call options. All data are observed on the market on November 7, 2019. Table~\ref{tab: VIX futures call options data set} (resp.\ Table~\ref{tab: VXX call options data set}) describes the features of the VIX (resp.\ VXX) options data set. In the tables we list also the different maturities of VIX and VXX options involved in the numerical exercise, and the values of VIX futures and VXX spot price, as observed in the market. Market data quotes are shown in the calibration section.

\begin{figure}
\centering
\scalebox{1.2}{
\begin{tikzpicture}
  \draw[->,line width=2pt] (0,0) -- (11.5,0);
  \draw[->,line width=2pt] (0,-1.5) -- (11.5,-1.5);
  \draw[dotted,line width=1pt,color=black!75] (5.4,-2) -- (5.4,2) node [above,rotate=90,xshift=-15pt,yshift=0pt] {{\footnotesize \scalebox{.7}[1.0]{2019}}};
  \path[color=black!75] (5.4,-2) -- (5.4,2) node [below,rotate=90,xshift=-15pt,yshift=0pt] {{\footnotesize \scalebox{.7}[1.0]{2020}}};

  \draw (0.2,0) node [below] {\rule{0pt}{2.5ex}{\footnotesize$\;T_0$}} -- (0.2,-0.1) node [right,rotate=90,xshift=5pt,yshift=0pt] {{\footnotesize \scalebox{.7}[1.0]{Nov 7}}};
  \draw (1.5,0) node [below] {\rule{0pt}{2.5ex}{\footnotesize$T^{VIX}_1$}} -- (1.5,-0.1) node [right,rotate=90,xshift=5pt,yshift=0pt] {{\footnotesize \scalebox{.7}[1.0]{Nov 20}}};
  \draw (4.1,0) node [below] {\rule{0pt}{2.5ex}{\footnotesize$\!\!\!\!\!\!\!\!\!T^{VIX}_2$}} -- (4.1,-0.1) node [right,rotate=90,xshift=5pt,yshift=3pt] {{\footnotesize \scalebox{.7}[1.0]{Dec 18}}};
  \draw (7.8,0) node [below] {\rule{0pt}{2.5ex}{\footnotesize$\;T^{VIX}_3$}} -- (7.8,-0.1) node [right,rotate=90,xshift=5pt,yshift=0pt] {{\footnotesize \scalebox{.7}[1.0]{Jan 22}}};
  \draw (10.4,0) node [below] {\rule{0pt}{2.5ex}{\footnotesize$T^{VIX}_4$}} -- (10.4,-0.1) node [right,rotate=90,xshift=5pt,yshift=0pt] {{\footnotesize \scalebox{.7}[1.0]{Feb 19}}};

  \draw (1.0,0) node [right,rotate=90,xshift=1.9pt,yshift=0pt] {{\footnotesize \scalebox{.7}[1.0]{Nov 15}}} -- (1.0,-1.6) node [below] {\rule{0pt}{2.3ex}{\footnotesize$T^{VXX}_1$}};
  \draw (4.3,0) node [right,rotate=90,xshift=1.9pt,yshift=-3pt] {{\footnotesize \scalebox{.7}[1.0]{Dec 20}}} -- (4.3,-1.6) node [below] {\rule{0pt}{2.3ex}{\footnotesize$T^{VXX}_2$}};
  \draw (7.3,0) node [right,rotate=90,xshift=1.9pt,yshift=0pt] {{\footnotesize \scalebox{.7}[1.0]{Jan 17}}} -- (7.3,-1.6) node [below] {\rule{0pt}{2.3ex}{\footnotesize$T^{VXX}_3$}};
\end{tikzpicture}}%
\caption{Term structure of VIX and VXX option maturities starting from November $7$, $2019$.}
\label{fig:VIX and VXX maturities}
\end{figure}
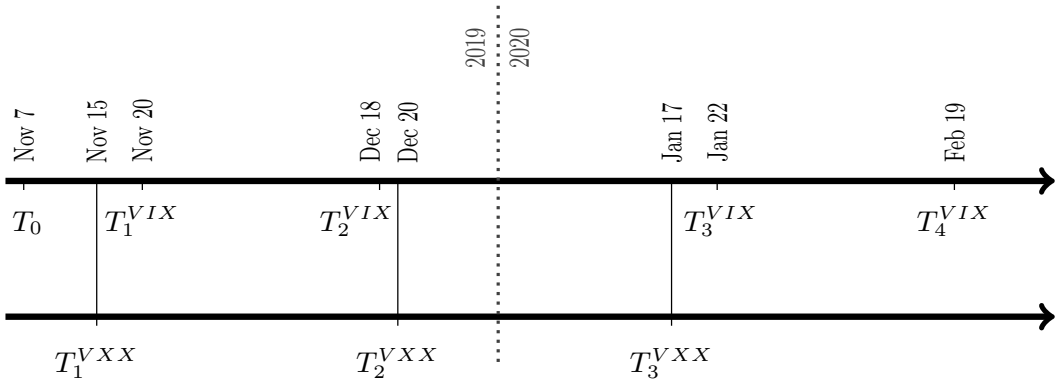

\subsection{Parameter sensitivities}
\label{subsection: Parameters Sensitivity}

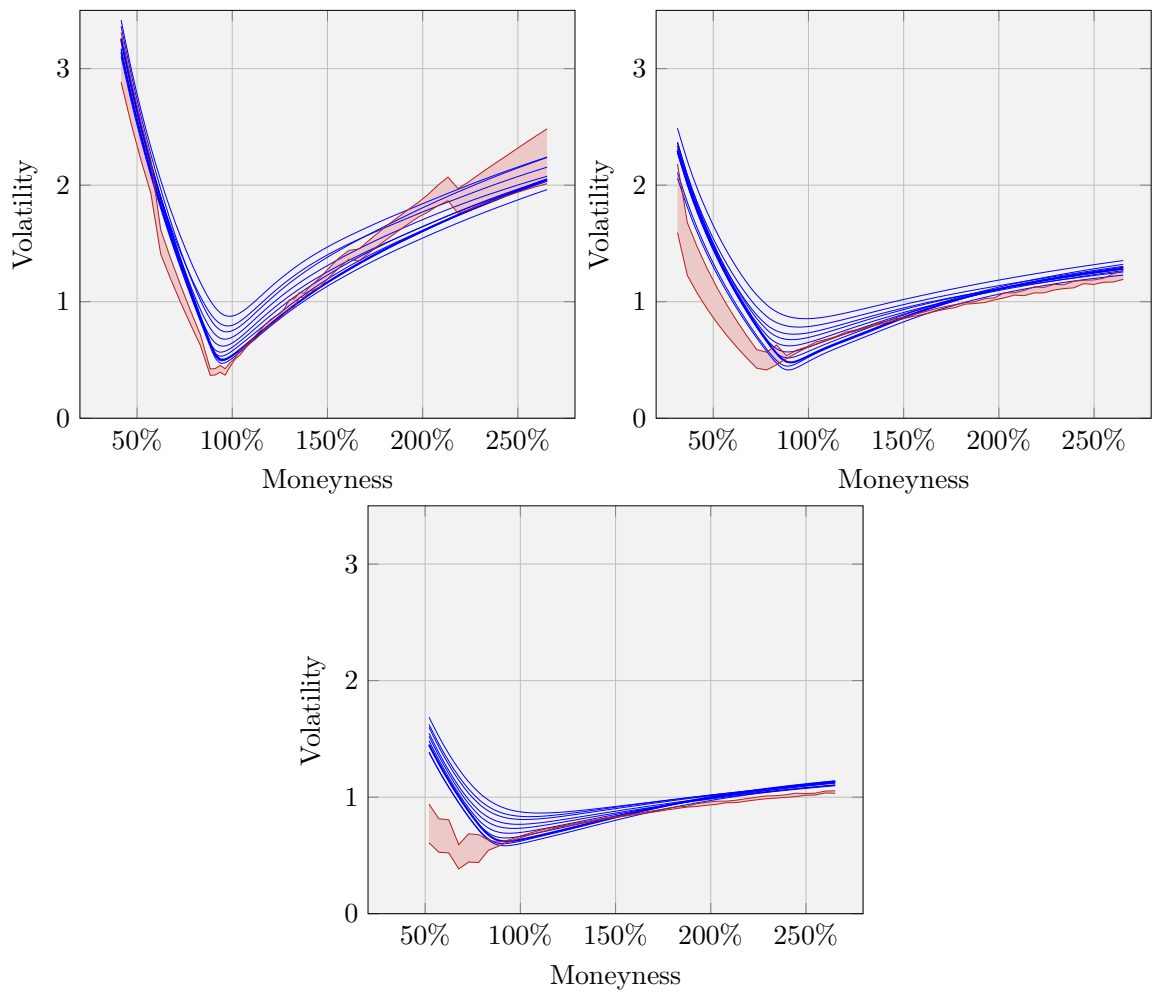
\begin{figure}
    \begin{center}
    \scalebox{0.95}{%
    \begin{tikzpicture}
    \begin{axis}[no markers,
                xlabel=Moneyness,
                ylabel=Volatility,
                ylabel style={overlay},
                restrict x to domain=0.3:2.7,
                xmin=0.2, xmax=2.8,
                xtick={0.5,1,1.5,2,2.5},
                xticklabels={50\%,100\%,150\%,200\%,250\%},
                ymin=0, ymax=3.5,
                grid=major,
                axis background/.style={fill=gray!10}]
    \addplot [color=blue,ultra thin,smooth] table [y=vxxLVam0,x expr=\thisrow{vxxKa}/19.22] from \vxxmean;
    \addplot [color=blue,ultra thin,smooth] table [y=vxxLVam1,x expr=\thisrow{vxxKa}/19.22] from \vxxmean;
    \addplot [color=blue,ultra thin,smooth] table [y=vxxLVam2,x expr=\thisrow{vxxKa}/19.22] from \vxxmean;
    \addplot [color=blue,ultra thin,smooth] table [y=vxxLVam3,x expr=\thisrow{vxxKa}/19.22] from \vxxmean;
    \addplot [color=blue,ultra thin,smooth] table [y=vxxLVam4,x expr=\thisrow{vxxKa}/19.22] from \vxxmean;
    \addplot [color=blue,ultra thin,smooth] table [y=vxxLVam5,x expr=\thisrow{vxxKa}/19.22] from \vxxmean;
    \addplot [color=blue,ultra thin,smooth] table [y=vxxLVam6,x expr=\thisrow{vxxKa}/19.22] from \vxxmean;
    \addplot [color=blue,ultra thin,smooth] table [y=vxxLVam7,x expr=\thisrow{vxxKa}/19.22] from \vxxmean;
    \addplot [color=blue,ultra thin,smooth] table [y=vxxLVam8,x expr=\thisrow{vxxKa}/19.22] from \vxxmean;
    \addplot [color=blue,thick,smooth] table [y=vxxLVam75,x expr=\thisrow{vxxKa}/19.22] from \vxxmean;
    \addplot [name path=upper,color=mydarkred,thin] table [y expr=\thisrow{vxxMKTa}+\thisrow{vxxBAa}/2,x expr=\thisrow{vxxKa}/19.22] from \vixvxx;
    \addplot [name path=lower,color=mydarkred,thin] table [y expr=\thisrow{vxxMKTa}-\thisrow{vxxBAa}/2,x expr=\thisrow{vxxKa}/19.22] from \vixvxx;
    \addplot [fill=mydarkred!25] fill between[of=upper and lower];  
    \end{axis}
    \end{tikzpicture}}%
    \hspace*{0.5cm}
    \scalebox{0.95}{%
    \begin{tikzpicture}
    \begin{axis}[no markers,
                xlabel=Moneyness,
                ylabel=Volatility,
                ylabel style={overlay},
                restrict x to domain=0.3:2.7,
                xmin=0.2, xmax=2.8,
                xtick={0.5,1,1.5,2,2.5},
                xticklabels={50\%,100\%,150\%,200\%,250\%},
                ymin=0, ymax=3.5,
                grid=major,
                axis background/.style={fill=gray!10}]
    \addplot [color=blue,ultra thin,smooth] table [y=vxxLVbm0,x expr=\thisrow{vxxKb}/19.22] from \vxxmean;
    \addplot [color=blue,ultra thin,smooth] table [y=vxxLVbm1,x expr=\thisrow{vxxKb}/19.22] from \vxxmean;
    \addplot [color=blue,ultra thin,smooth] table [y=vxxLVbm2,x expr=\thisrow{vxxKb}/19.22] from \vxxmean;
    \addplot [color=blue,ultra thin,smooth] table [y=vxxLVbm3,x expr=\thisrow{vxxKb}/19.22] from \vxxmean;
    \addplot [color=blue,ultra thin,smooth] table [y=vxxLVbm4,x expr=\thisrow{vxxKb}/19.22] from \vxxmean;
    \addplot [color=blue,ultra thin,smooth] table [y=vxxLVbm5,x expr=\thisrow{vxxKb}/19.22] from \vxxmean;
    \addplot [color=blue,ultra thin,smooth] table [y=vxxLVbm6,x expr=\thisrow{vxxKb}/19.22] from \vxxmean;
    \addplot [color=blue,ultra thin,smooth] table [y=vxxLVbm7,x expr=\thisrow{vxxKb}/19.22] from \vxxmean;
    \addplot [color=blue,ultra thin,smooth] table [y=vxxLVbm8,x expr=\thisrow{vxxKb}/19.22] from \vxxmean;
    \addplot [color=blue,thick,smooth] table [y=vxxLVbm75,x expr=\thisrow{vxxKb}/19.22] from \vxxmean;
    \addplot [name path=upper,color=mydarkred,thin] table [y expr=\thisrow{vxxMKTb}+\thisrow{vxxBAb}/2,x expr=\thisrow{vxxKb}/19.22] from \vixvxx;
    \addplot [name path=lower,color=mydarkred,thin] table [y expr=\thisrow{vxxMKTb}-\thisrow{vxxBAb}/2,x expr=\thisrow{vxxKb}/19.22] from \vixvxx;
    \addplot [fill=mydarkred!25] fill between[of=upper and lower];  
    \end{axis}
    \end{tikzpicture}}\\
    \scalebox{0.95}{%
    \begin{tikzpicture}
    \begin{axis}[no markers,
                xlabel=Moneyness,
                ylabel=Volatility,
                ylabel style={overlay},
                restrict x to domain=0.3:2.7,
                xmin=0.2, xmax=2.8,
                xtick={0.5,1,1.5,2,2.5},
                xticklabels={50\%,100\%,150\%,200\%,250\%},
                ymin=0, ymax=3.5,
                grid=major,
                axis background/.style={fill=gray!10}]
    \addplot [color=blue,ultra thin,smooth] table [y=vxxLVcm0,x expr=\thisrow{vxxKc}/19.22] from \vxxmean;
    \addplot [color=blue,ultra thin,smooth] table [y=vxxLVcm1,x expr=\thisrow{vxxKc}/19.22] from \vxxmean;
    \addplot [color=blue,ultra thin,smooth] table [y=vxxLVcm2,x expr=\thisrow{vxxKc}/19.22] from \vxxmean;
    \addplot [color=blue,ultra thin,smooth] table [y=vxxLVcm3,x expr=\thisrow{vxxKc}/19.22] from \vxxmean;
    \addplot [color=blue,ultra thin,smooth] table [y=vxxLVcm4,x expr=\thisrow{vxxKc}/19.22] from \vxxmean;
    \addplot [color=blue,ultra thin,smooth] table [y=vxxLVcm5,x expr=\thisrow{vxxKc}/19.22] from \vxxmean;
    \addplot [color=blue,ultra thin,smooth] table [y=vxxLVcm6,x expr=\thisrow{vxxKc}/19.22] from \vxxmean;
    \addplot [color=blue,ultra thin,smooth] table [y=vxxLVcm7,x expr=\thisrow{vxxKc}/19.22] from \vxxmean;
    \addplot [color=blue,ultra thin,smooth] table [y=vxxLVcm8,x expr=\thisrow{vxxKc}/19.22] from \vxxmean;
    \addplot [color=blue,thick,smooth] table [y=vxxLVcm75,x expr=\thisrow{vxxKc}/19.22] from \vxxmean;
    \addplot [name path=upper,color=mydarkred,thin] table [y expr=\thisrow{vxxMKTc}+\thisrow{vxxBAc}/2,x expr=\thisrow{vxxKc}/19.22] from \vixvxx;
    \addplot [name path=lower,color=mydarkred,thin] table [y expr=\thisrow{vxxMKTc}-\thisrow{vxxBAc}/2,x expr=\thisrow{vxxKc}/19.22] from \vixvxx;
    \addplot [fill=mydarkred!25] fill between[of=upper and lower];  
    \end{axis}
    \end{tikzpicture}}
    \end{center}
    \caption{Impact on the VXX smile of the mean-reversion speed. Market bid-ask quotes span the red shaded area, while model implied volatilities are the blue solid lines with mean-reversion speed ranging from $0$ to $8$. The thick blue line is $a=7.5$.}
\label{fig:impact of mean reversion}
\end{figure}

\begin{figure}
    \begin{center}
    \scalebox{0.95}{%
    \begin{tikzpicture}
    \begin{axis}[no markers,
                xlabel=Moneyness,
                ylabel=Volatility,
                ylabel style={overlay},
                restrict x to domain=0.3:2.7,
                xmin=0.2, xmax=2.8,
                xtick={0.5,1,1.5,2,2.5},
                xticklabels={50\%,100\%,150\%,200\%,250\%},
                ymin=0, ymax=3.5,
                grid=major,
                axis background/.style={fill=gray!10}]
    \addplot [color=blue,ultra thin,smooth] table [y=vxxLVax01,x expr=\thisrow{vxxKa}/19.22] from \vxxvolvol;
    \addplot [color=blue,ultra thin,smooth] table [y=vxxLVax04,x expr=\thisrow{vxxKa}/19.22] from \vxxvolvol;
    \addplot [color=blue,ultra thin,smooth] table [y=vxxLVax07,x expr=\thisrow{vxxKa}/19.22] from \vxxvolvol;
    \addplot [color=blue,ultra thin,smooth] table [y=vxxLVax10,x expr=\thisrow{vxxKa}/19.22] from \vxxvolvol;
    \addplot [color=blue,ultra thin,smooth] table [y=vxxLVax13,x expr=\thisrow{vxxKa}/19.22] from \vxxvolvol;
    \addplot [color=blue,thick,smooth] table [y=vxxLVax11,x expr=\thisrow{vxxKa}/19.22] from \vxxvolvol;
    \addplot [name path=upper,color=mydarkred,thin] table [y expr=\thisrow{vxxMKTa}+\thisrow{vxxBAa}/2,x expr=\thisrow{vxxKa}/19.22] from \vixvxx;
    \addplot [name path=lower,color=mydarkred,thin] table [y expr=\thisrow{vxxMKTa}-\thisrow{vxxBAa}/2,x expr=\thisrow{vxxKa}/19.22] from \vixvxx;
    \addplot [fill=mydarkred!25] fill between[of=upper and lower];  
    \end{axis}
    \end{tikzpicture}}%
    \hspace*{0.5cm}
    \scalebox{0.95}{%
    \begin{tikzpicture}
    \begin{axis}[no markers,
                xlabel=Moneyness,
                ylabel=Volatility,
                ylabel style={overlay},
                restrict x to domain=0.3:2.7,
                xmin=0.2, xmax=2.8,
                xtick={0.5,1,1.5,2,2.5},
                xticklabels={50\%,100\%,150\%,200\%,250\%},
                ymin=0, ymax=3.5,
                grid=major,
                axis background/.style={fill=gray!10}]
    \addplot [color=blue,ultra thin,smooth] table [y=vxxLVbx01,x expr=\thisrow{vxxKb}/19.22] from \vxxvolvol;
    \addplot [color=blue,ultra thin,smooth] table [y=vxxLVbx04,x expr=\thisrow{vxxKb}/19.22] from \vxxvolvol;
    \addplot [color=blue,ultra thin,smooth] table [y=vxxLVbx07,x expr=\thisrow{vxxKb}/19.22] from \vxxvolvol;
    \addplot [color=blue,ultra thin,smooth] table [y=vxxLVbx10,x expr=\thisrow{vxxKb}/19.22] from \vxxvolvol;
    \addplot [color=blue,ultra thin,smooth] table [y=vxxLVbx11,x expr=\thisrow{vxxKb}/19.22] from \vxxvolvol;
    \addplot [color=blue,thick,smooth] table [y=vxxLVbx10,x expr=\thisrow{vxxKb}/19.22] from \vxxvolvol;
    \addplot [name path=upper,color=mydarkred,thin] table [y expr=\thisrow{vxxMKTb}+\thisrow{vxxBAb}/2,x expr=\thisrow{vxxKb}/19.22] from \vixvxx;
    \addplot [name path=lower,color=mydarkred,thin] table [y expr=\thisrow{vxxMKTb}-\thisrow{vxxBAb}/2,x expr=\thisrow{vxxKb}/19.22] from \vixvxx;
    \addplot [fill=mydarkred!25] fill between[of=upper and lower];  
    \end{axis}
    \end{tikzpicture}}\\
    \scalebox{0.95}{%
    \begin{tikzpicture}
    \begin{axis}[no markers,
                xlabel=Moneyness,
                ylabel=Volatility,
                ylabel style={overlay},
                restrict x to domain=0.3:2.7,
                xmin=0.2, xmax=2.8,
                xtick={0.5,1,1.5,2,2.5},
                xticklabels={50\%,100\%,150\%,200\%,250\%},
                ymin=0, ymax=3.5,
                grid=major,
                axis background/.style={fill=gray!10}]
    \addplot [color=blue,ultra thin,smooth] table [y=vxxLVcx01,x expr=\thisrow{vxxKc}/19.22] from \vxxvolvol;
    \addplot [color=blue,ultra thin,smooth] table [y=vxxLVcx04,x expr=\thisrow{vxxKc}/19.22] from \vxxvolvol;
    \addplot [color=blue,ultra thin,smooth] table [y=vxxLVcx07,x expr=\thisrow{vxxKc}/19.22] from \vxxvolvol;
    \addplot [color=blue,ultra thin,smooth] table [y=vxxLVcx10,x expr=\thisrow{vxxKc}/19.22] from \vxxvolvol;
    \addplot [color=blue,ultra thin,smooth] table [y=vxxLVcx13,x expr=\thisrow{vxxKc}/19.22] from \vxxvolvol;
    \addplot [color=blue,thick,smooth] table [y=vxxLVcx11,x expr=\thisrow{vxxKc}/19.22] from \vxxvolvol;
    \addplot [name path=upper,color=mydarkred,thin] table [y expr=\thisrow{vxxMKTc}+\thisrow{vxxBAc}/2,x expr=\thisrow{vxxKc}/19.22] from \vixvxx;
    \addplot [name path=lower,color=mydarkred,thin] table [y expr=\thisrow{vxxMKTc}-\thisrow{vxxBAc}/2,x expr=\thisrow{vxxKc}/19.22] from \vixvxx;
    \addplot [fill=mydarkred!25] fill between[of=upper and lower];  
    \end{axis}
    \end{tikzpicture}}
    \end{center}
    \caption{Impact on the VXX smile of the volatility of volatility. Market bid-ask quotes span the red shaded area, while model implied volatilities are the blue solid lines with volatilty of volatility ranging from $0.1$ to $1.3$. The thick blue line is $\xi=1.1$.}
\label{fig:impact of volvol}
\end{figure}
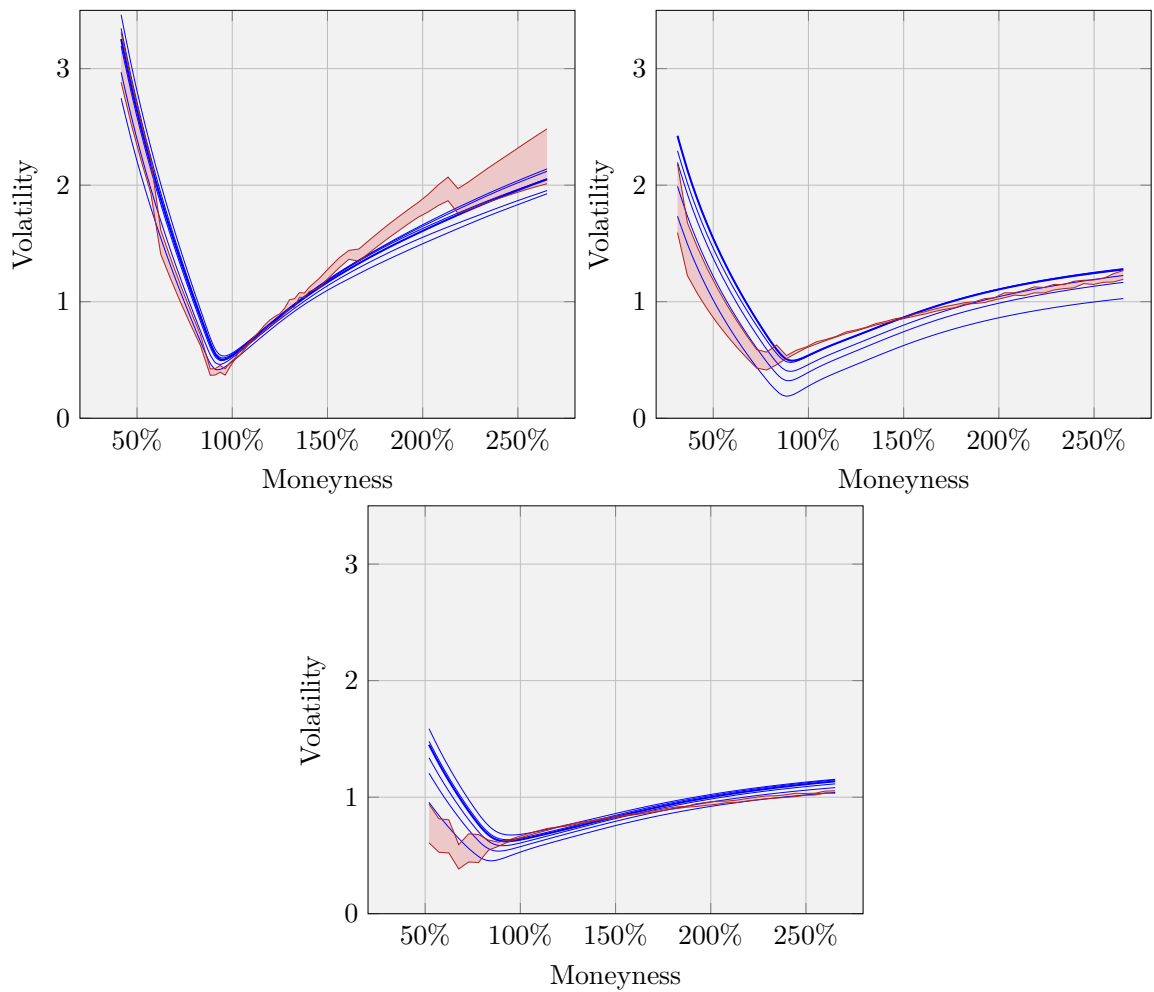

We now perform a sensitivity analysis that will highlight the impact of the model parameters on the VXX smiles. The aim of this analysis is to find a range of parameters that guarantees a good fit to the VXX market smile, while maintaining the fit on the VIX market smile. Then we test if these parameters  generate a local correlation $\rho(t,K) \in [-1,1]$, by solving Equation~\eqref{Rho equation}. More precisely, we let each parameter vary while keeping the others frozen. In this way, we can check the overall impact of each parameter on the shape of the VXX smile. This will give us an insight on  the set of reasonable  parameters that can generate a good fit, provided that the local correlation in Equation~\eqref{Rho equation} is admissible. The analysis is performed by fixing the correlation $\rho=0.85$.

We shall investigate the impact of the mean-reversion speed $a$ and the volatility-of-volatility $\xi$. We omit an explicit analysis of the impact of the other parameters $\bar{v}$, $\kappa$, $\theta$ and $\rho^v$, since it turns out empirically that they do not play an important role in the smile modelling. We set in the following $\rho^v=0.75$, and we choose the remaining three parameters so that the Feller condition holds. In particular, we choose $\kappa = 2.5$, $\theta = 2.5$, $\bar{v} = 1$. 

We start by the impact of the mean-reversion $a$. We perform the simulations with volatility-of-volatility $\xi=1.1$, while the mean-reversion speed can range from $0$ to $8$. In Figure~\ref{fig:impact of mean reversion} we present the results. We get that the higher the mean-reversion, the lower the smile. Thus, in order reproduce the market ATM implied volatility level for the VXX smile we need a high mean-reversion speed, the best match being $a=7.5$, in particular if we look near the at-the-money level of the first maturity. The (high) value for the mean-reversion speed could seem misconceiving. However, we should  keep in mind that we are jointly modelling two markets, the VIX and VXX option books, that display different peculiarities, therefore a somehow non standard level for the mean reversion seems a reasonable price to pay in order to reach our goal.

Then, we consider the impact of the volatility-of-volatility $\xi$. We perform the simulations with mean-reversion speed $a=7.5$, while the volatility of volatility can range from $0.1$ to $1.3$. In Figure~\ref{fig:impact of volvol} we present the results. We observe a relevant impact on the left wing of the smile. The best match seems to be for $\xi=1.1$, in particular if we look near the at-the-money level of the first maturity.

In the next subsection, we are going to build up a consistent framework in order to fix an admissible choice for all parameters that allows us to reproduce the market smiles of both VIX and VXX. In particular we check a posteriori that the choices of the mean-reversion speed and of the volatility of volatility are compatible with an admissible local correlation for most of the simulation paths.

\subsection{Joint fit of VIX and VXX market smiles}

\begin{table}
  \centering
  \begin{tabular}{|c|c|c|c|c|}
    \hline \rule{0pt}{2.5ex} $T^{VIX}$ & $K$ & Ask & Bid & Model \\
    \hline
    & $14.0$ & 1.2346 & 0.9477 & 1.0739\\
    November 20, 2019 & $14.5$ & 1.2079 & 0.9793 & 1.0911 \\
    & $15.0$ & 1.2702 & 0.9793 & 1.1392 \\
    \hline
    & $15.0$ & 0.9409 & 0.8149 & 0.8642 \\
    December 18, 2019 & $16.0$ & 0.9583 & 0.8407 & 0.9030 \\
    & $17.0$ & 0.9932 & 0.8880 & 0.9170 \\
    \hline
    & $16.0$ & 0.7818 & 0.6776 & 0.7263 \\
    January 22, 2020 & $17.0$ & 0.8107 & 0.7132 & 0.7540 \\
    & $18.0$ & 0.8336 & 0.7386 & 0.7839 \\
    \hline
    & $17.0$ & 0.7478 & 0.6506 & 0.6896 \\
    February 19, 2020 & $18.0$ & 0.7555 & 0.6759 & 0.7103 \\
    & $19.0$ & 0.7881 & 0.6969 & 0.7360 \\
    \hline
  \end{tabular}
  \caption{Implied volatility comparison between market quotes as on November 7, 2019, and the implied volatility obtained via our SLV models for VIX call options, for the four maturities and around ATM.}
\label{tab: VIX ATM table}
\end{table}

\begin{table}
  \centering
  \begin{tabular}{|c|c|c|c|c|}
    \hline \rule{0pt}{2.5ex} $T^{VXX}$ & $K$ & Ask & Bid & Model \\
    \hline
    & $19$ & 0.4776 & 0.4428 & 0.4602 \\
    November 15, 2019 & $19.5$ & 0.5153 & 0.5019 & 0.5030 \\
    & $20$ & 0.5844 & 0.5426 & 0.5473 \\
    \hline
    & $19$ & 0.6116 & 0.6040 & 0.6050 \\
    December 20, 2019 & $19.5$ & 0.6333 & 0.6101 & 0.6154 \\
    & $20$ & 0.6549 & 0.6311 & 0.6329 \\
    \hline
    & $19$ & 0.6603 & 0.6443 & 0.6450 \\
    January 17, 2020 & $19.5$ & 0.6736 & 0.6503 & 0.6514 \\
    & $20$ & 0.6870 & 0.6723 & 0.6735 \\
    \hline
  \end{tabular}
  \caption{Implied volatility comparison between market quotes as on November 7, 2019 and the implied volatility obtained via our SLV models for VXX call options, for the three maturities and around ATM.}
\label{tab: VXX ATM table}
\end{table}

Thanks to the analysis performed in Subsection~\ref{subsection: Parameters Sensitivity}, we are now able to pick a mix of parameters that enables us to get a good fit to both the VIX and VXX market smiles and that leads  to an admissible correlation parameter $\rho(t,K) \in [-1,1]$. That is, the model gives the proper correlation we need to put in the VIX and VXX dynamics, in order to reprice correctly the corresponding options, as described in Section~\ref{An Example with VIX and VXX}. Let us summarize the parameters that we have fixed in order to perform the joint fit.

\begin{itemize}
\item The mean-reversion speed used in the joint fit is $a = 7.5$ (this value revealed to be the optimal trade-off in order to recover the ATM implied-volatility level for the VXX options, while keeping a good fit for the VIX market).
\item The volatility-of-volatility parameter is fixed as  $\xi = 1.1$.
\item The correlation between futures and their volatilities is fixed as $\rho^v = 0.75$.
\item The remaining parameters used in the CIR-like volatility process are chosen by taking into account the non-violation of the Feller condition. In particular, we choose $\kappa = 2.5$, $\theta = 2.5$, $\bar{v} = 1$.
\end{itemize}

We have now all the ingredients to perform the joint fit of the implied volatilities of call options on VIX and VXX for different strikes and maturities around the ATM via the SLV method described previously which employs a Monte Carlo simulation with $5 \cdot 10^5$ paths. In Table~\ref{tab: VIX ATM table} (resp.\ Table~\ref{tab: VXX ATM table}) we show the model implied volatilities on VIX (resp.\ VXX) around ATM, that are generated by the SLV model,  and the market ones, with the relative errors.

In Figure~\ref{fig: VIX model} (resp.\ Figure~\ref{fig: VXX model}) we show the whole implied volatilities for all strikes  obtained by our SLV model  and the market volatilities of VIX futures (resp.\ VXX) call options. Despite the irregular shape of the market VXX smiles, the fit is overall good and even remarkably good for the VIX smiles.

We complete the discussion of the results by showing some histograms of the distribution of the local correlation parameter $\rho(t,V_t)$, when evaluated on the VXX value simulated along each Monte Carlo path. We are able to check that for most of the paths the correlation values are within the admissible interval $ [-1,1]$, as required in Proposition~\ref{Prop: rho computed VXX}. In Figure~\ref{fig:histo} we show the histograms for different times. In particular we notice that the distribution of the local correlation quickly flattens as the time elapses and it stabilizes around a mean value of 22\%, with a standard deviation of 32\%, and with less than the 3\% of paths with a local correlation value exceeding $1$. The local correlation in the simulation is capped to the maximum allowed value of $1$. The satisfactory fit to VXX plain vanilla data shown in Figure~\ref{fig: VXX model} is an empirical evidence that we can safely use such regularization.

\begin{figure}
    \begin{center}
    \scalebox{0.95}{%
    \begin{tikzpicture}
    \begin{axis}[no markers,
                xlabel=Moneyness,
                ylabel=Volatility,
                ylabel style={overlay},
                restrict x to domain=0.5:4.5,
                xmin=0.4, xmax=4.6,
                xtick={1,2,3,4},
                xticklabels={100\%,200\%,300\%,400\%},
                ymin=0, ymax=3.8,
                grid=major,
                axis background/.style={fill=gray!10}]
    \addplot [color=blue,thick,smooth] table [y=vixLVa,x expr=\thisrow{vixKa}/14.60] from \vixvxx;
    \addplot [name path=upper,color=mydarkred,thin] table [y expr=\thisrow{vixMKTa}+\thisrow{vixBAa}/2,x expr=\thisrow{vixKa}/14.60] from \vixvxx;
    \addplot [name path=lower,color=mydarkred,thin] table [y expr=\thisrow{vixMKTa}-\thisrow{vixBAa}/2,x expr=\thisrow{vixKa}/14.60] from \vixvxx;
    \addplot [fill=mydarkred!25] fill between[of=upper and lower];  
    \end{axis}
    \end{tikzpicture}}
    \hspace*{0.5cm}
    \scalebox{0.95}{%
    \begin{tikzpicture}
    \begin{axis}[no markers,
                xlabel=Moneyness,
                ylabel=Volatility,
                ylabel style={overlay},
                restrict x to domain=0.5:4.5,
                xmin=0.4, xmax=4.6,
                xtick={1,2,3,4},
                xticklabels={100\%,200\%,300\%,400\%},
                ymin=0, ymax=3.8,
                grid=major,
                axis background/.style={fill=gray!10}]
    \addplot [color=blue,thick,smooth] table [y=vixLVb,x expr=\thisrow{vixKb}/16.15] from \vixvxx;
    \addplot [name path=upper,color=mydarkred,thin] table [y expr=\thisrow{vixMKTb}+\thisrow{vixBAb}/2,x expr=\thisrow{vixKb}/16.15] from \vixvxx;
    \addplot [name path=lower,color=mydarkred,thin] table [y expr=\thisrow{vixMKTb}-\thisrow{vixBAb}/2,x expr=\thisrow{vixKb}/16.15] from \vixvxx;
    \addplot [fill=mydarkred!25] fill between[of=upper and lower];  
    \end{axis}
    \end{tikzpicture}}\\
    \scalebox{0.95}{%
    \begin{tikzpicture}
    \begin{axis}[no markers,
                xlabel=Moneyness,
                ylabel=Volatility,
                ylabel style={overlay},
                restrict x to domain=0.5:4.5,
                xmin=0.4, xmax=4.6,
                xtick={1,2,3,4},
                xticklabels={100\%,200\%,300\%,400\%},
                ymin=0, ymax=3.8,
                grid=major,
                axis background/.style={fill=gray!10}]
    \addplot [color=blue,thick,smooth] table [y=vixLVc,x expr=\thisrow{vixKc}/17.45] from \vixvxx;
    \addplot [name path=upper,color=mydarkred,thin] table [y expr=\thisrow{vixMKTc}+\thisrow{vixBAc}/2,x expr=\thisrow{vixKc}/17.45] from \vixvxx;
    \addplot [name path=lower,color=mydarkred,thin] table [y expr=\thisrow{vixMKTc}-\thisrow{vixBAc}/2,x expr=\thisrow{vixKc}/17.45] from \vixvxx;
    \addplot [fill=mydarkred!25] fill between[of=upper and lower];  
    \end{axis}
    \end{tikzpicture}}
    \hspace*{0.5cm}
    \scalebox{0.95}{%
    \begin{tikzpicture}
    \begin{axis}[no markers,
                xlabel=Moneyness,
                ylabel=Volatility,
                ylabel style={overlay},
                restrict x to domain=0.5:4.5,
                xmin=0.4, xmax=4.6,
                xtick={1,2,3,4},
                xticklabels={100\%,200\%,300\%,400\%},
                ymin=0, ymax=3.8,
                grid=major,
                axis background/.style={fill=gray!10}]
    \addplot [color=blue,thick,smooth] table [y=vixLVd,x expr=\thisrow{vixKd}/18.15] from \vixvxx;
    \addplot [name path=upper,color=mydarkred,thin] table [y expr=\thisrow{vixMKTd}+\thisrow{vixBAd}/2,x expr=\thisrow{vixKd}/18.15] from \vixvxx;
    \addplot [name path=lower,color=mydarkred,thin] table [y expr=\thisrow{vixMKTd}-\thisrow{vixBAd}/2,x expr=\thisrow{vixKd}/18.15] from \vixvxx;
    \addplot [fill=mydarkred!25] fill between[of=upper and lower];  
    \end{axis}
    \end{tikzpicture}}
    \end{center}
    \caption{VIX futures model implied volatility (blue solid line) against market bid-ask quotes (red shaded area), as on November 7, 2019, for the four maturities November 20, 2019, December 18, 2019, January 22, 2020 and February 19, 2020, from the top to the bottom. }
    \label{fig: VIX model}
\end{figure}
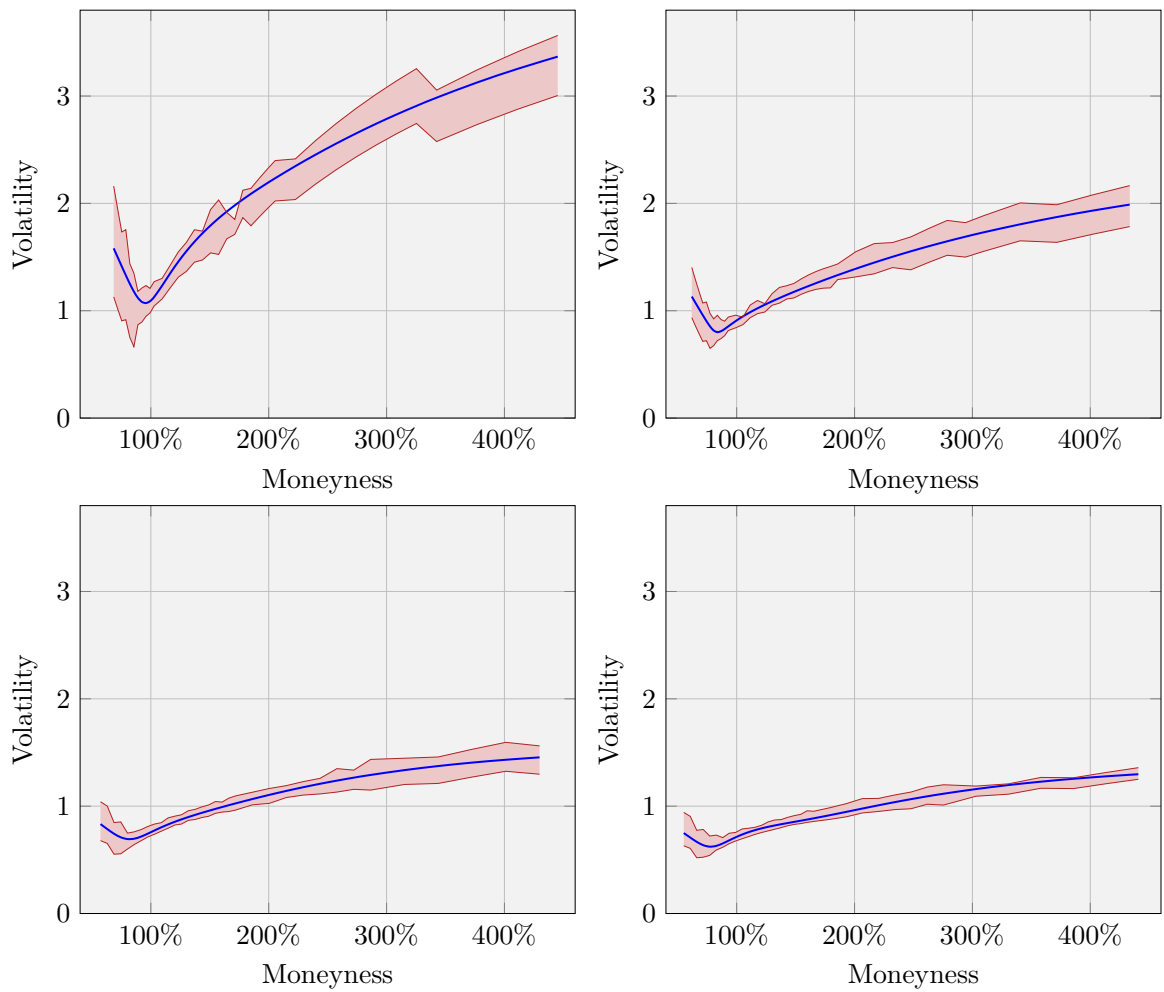

\begin{figure}
    \begin{center}
    \scalebox{0.95}{%
    \begin{tikzpicture}
    \begin{axis}[no markers,
                xlabel=Moneyness,
                ylabel=Volatility,
                ylabel style={overlay},
                restrict x to domain=0.3:2.7,
                xmin=0.2, xmax=2.8,
                xtick={0.5,1,1.5,2,2.5},
                xticklabels={50\%,100\%,150\%,200\%,250\%},
                ymin=0, ymax=3.5,
                grid=major,
                axis background/.style={fill=gray!10}]
    \addplot [color=blue,thick,smooth] table [y=vxxLVa,x expr=\thisrow{vxxKa}/19.22] from \vixvxx;
    \addplot [name path=upper,color=mydarkred,thin] table [y expr=\thisrow{vxxMKTa}+\thisrow{vxxBAa}/2,x expr=\thisrow{vxxKa}/19.22] from \vixvxx;
    \addplot [name path=lower,color=mydarkred,thin] table [y expr=\thisrow{vxxMKTa}-\thisrow{vxxBAa}/2,x expr=\thisrow{vxxKa}/19.22] from \vixvxx;
    \addplot [fill=mydarkred!25] fill between[of=upper and lower];  
    \end{axis}
    \end{tikzpicture}}
    \hspace*{0.5cm}
    \scalebox{0.95}{%
    \begin{tikzpicture}
    \begin{axis}[no markers,
                xlabel=Moneyness,
                ylabel=Volatility,
                ylabel style={overlay},
                restrict x to domain=0.3:2.7,
                xmin=0.2, xmax=2.8,
                xtick={0.5,1,1.5,2,2.5},
                xticklabels={50\%,100\%,150\%,200\%,250\%},
                ymin=0, ymax=3.5,
                grid=major,
                axis background/.style={fill=gray!10}]
    \addplot [color=blue,thick,smooth] table [y=vxxLVb,x expr=\thisrow{vxxKb}/19.22] from \vixvxx;
    \addplot [name path=upper,color=mydarkred,thin] table [y expr=\thisrow{vxxMKTb}+\thisrow{vxxBAb}/2,x expr=\thisrow{vxxKb}/19.22] from \vixvxx;
    \addplot [name path=lower,color=mydarkred,thin] table [y expr=\thisrow{vxxMKTb}-\thisrow{vxxBAb}/2,x expr=\thisrow{vxxKb}/19.22] from \vixvxx;
    \addplot [fill=mydarkred!25] fill between[of=upper and lower];  
    \end{axis}
    \end{tikzpicture}}\\
    \scalebox{0.95}{%
    \begin{tikzpicture}
    \begin{axis}[no markers,
                xlabel=Moneyness,
                ylabel=Volatility,
                ylabel style={overlay},
                restrict x to domain=0.3:2.7,
                xmin=0.2, xmax=2.8,
                xtick={0.5,1,1.5,2,2.5},
                xticklabels={50\%,100\%,150\%,200\%,250\%},
                ymin=0, ymax=3.5,
                grid=major,
                axis background/.style={fill=gray!10}]
    \addplot [color=blue,thick,smooth] table [y=vxxLVc,x expr=\thisrow{vxxKc}/19.22] from \vixvxx;
    \addplot [name path=upper,color=mydarkred,thin] table [y expr=\thisrow{vxxMKTc}+\thisrow{vxxBAc}/2,x expr=\thisrow{vxxKc}/19.22] from \vixvxx;
    \addplot [name path=lower,color=mydarkred,thin] table [y expr=\thisrow{vxxMKTc}-\thisrow{vxxBAc}/2,x expr=\thisrow{vxxKc}/19.22] from \vixvxx;
    \addplot [fill=mydarkred!25] fill between[of=upper and lower];  
    \end{axis}
    \end{tikzpicture}}
    \end{center}
\caption{VXX model implied volatility (blue solid line) against market bid-ask quotes (red shaded area), as on November 7, 2019, for the three maturities November 15, 2019, December 20, 2019 and January 17, 2020, from the top to the bottom. }
\label{fig: VXX model}
\end{figure}
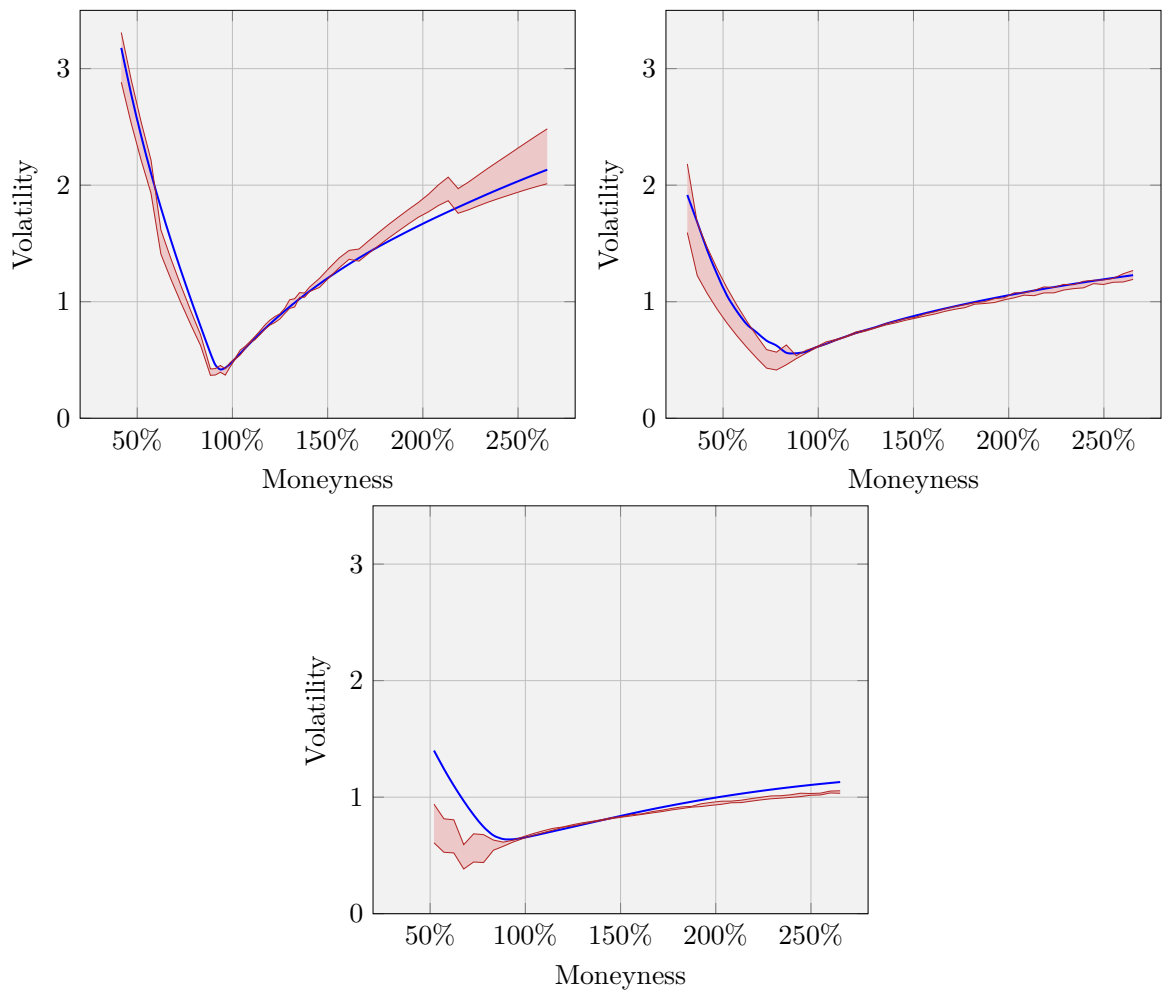

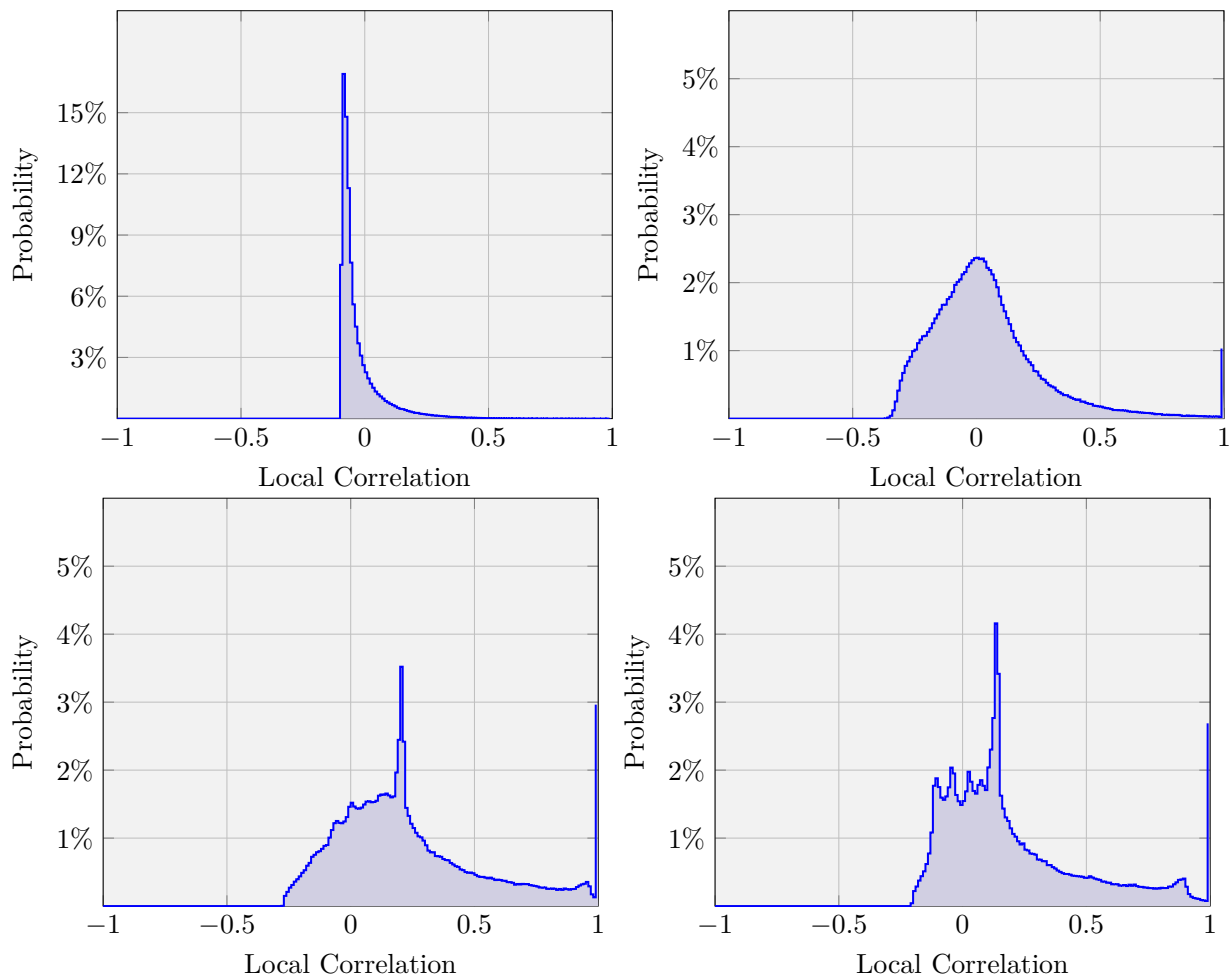
\begin{figure}
    \begin{center}
    \scalebox{0.95}{%
    \begin{tikzpicture}
    \begin{axis}[xlabel=Local Correlation,
                 ylabel=Probability,
                 ylabel style={overlay},
                 xmin=-1, xmax=1,
                 ymin=0, ymax=0.2,
                 scaled y ticks=false,
                 ytick={0.03,0.06,0.09,0.12,0.15},
                 yticklabels={3\%,6\%,9\%,12\%,15\%},
                 grid=major,
                 axis background/.style={fill=gray!10}]
    \addplot [name path=density,color=blue,thick,smooth,const plot mark left,mark=none] table [y=t02,x=xBin] from \rhohist;
    \path[name path=axis] (axis cs:-1,0) -- (axis cs:1,0);
    \addplot [fill=mydarkblue!25] fill between[of=density and axis];  
    \end{axis}
    \end{tikzpicture}}%
    \hspace*{0.5cm}
    \scalebox{0.95}{%
    \begin{tikzpicture}
    \begin{axis}[xlabel=Local Correlation,
                 ylabel=Probability,
                 ylabel style={overlay},
                 xmin=-1, xmax=1,
                 ymin=0, ymax=0.06,
                 scaled y ticks=false,
                 ytick={0.01,0.02,0.03,0.04,0.05},
                 yticklabels={1\%,2\%,3\%,4\%,5\%},
                 grid=major,
                 axis background/.style={fill=gray!10}]
    \addplot [name path=density,color=blue,thick,smooth,const plot mark left,mark=none] table [y=t09,x=xBin] from \rhohist;
    \path[name path=axis] (axis cs:-1,0) -- (axis cs:1,0);
    \addplot [fill=mydarkblue!25] fill between[of=density and axis];  
    \end{axis}
    \end{tikzpicture}}\\
    \scalebox{0.95}{%
    \begin{tikzpicture}
    \begin{axis}[xlabel=Local Correlation,
                 ylabel=Probability,
                 ylabel style={overlay},
                 xmin=-1, xmax=1,
                 ymin=0, ymax=0.06,
                 scaled y ticks=false,
                 ytick={0.01,0.02,0.03,0.04,0.05},
                 yticklabels={1\%,2\%,3\%,4\%,5\%},
                 grid=major,
                 axis background/.style={fill=gray!10}]
    \addplot [name path=density,color=blue,thick,smooth,const plot mark left,mark=none] table [y=t59,x=xBin] from \rhohist;
    \path[name path=axis] (axis cs:-1,0) -- (axis cs:1,0);
    \addplot [fill=mydarkblue!25] fill between[of=density and axis];  
    \end{axis}
    \end{tikzpicture}}%
    \hspace*{0.5cm}
    \scalebox{0.95}{%
    \begin{tikzpicture}
    \begin{axis}[xlabel=Local Correlation,
                 ylabel=Probability,
                 ylabel style={overlay},
                 xmin=-1, xmax=1,
                 ymin=0, ymax=0.06,
                 scaled y ticks=false,
                 ytick={0.01,0.02,0.03,0.04,0.05},
                 yticklabels={1\%,2\%,3\%,4\%,5\%},
                 grid=major,
                 axis background/.style={fill=gray!10}]
    \addplot [name path=density,color=blue,thick,smooth,const plot mark left,mark=none] table [y=t67,x=xBin] from \rhohist;
    \path[name path=axis] (axis cs:-1,0) -- (axis cs:1,0);
    \addplot [fill=mydarkblue!25] fill between[of=density and axis];  
    \end{axis}
    \end{tikzpicture}}%
    \end{center}
    \caption{Histograms of the distribution of the local correlation parameter $\rho(t,V_t)$ for $t=2,9,59,67$ days.}
\label{fig:histo}
\end{figure}

\section{Conclusion and further developments}
\label{conclusion}

In this paper, we have provided a general framework for a calibration of Exchange-Traded Products (ETP) based on futures strategies. In a numerical simulation based on real data we have considered, as an example of ETP, the VXX ETN, which is a strategy on the nearest and second-nearest maturing VIX futures contracts. The main ingredient to achieve our goal  was a full stochastic local-volatility model that can be calibrated on a book of options on a general ETP and its underlying futures contracts in a parsimonious manner.

A parameters sensitivity analysis allowed us to fix suitable values for the model parameters in order to fit the smile of the VIX and VXX ETN, with a high level of accuracy. Our specification led to an admissible local correlation function that justifies the consistency of the whole procedure. Remarkably, we found that the stochastic component of the volatility was necessary in order to fit market quotes. In other words, a purely local volatility approach is not rich enough for calibrating both the VIX and VXX markets. The implementation of a full calibration algorithm (in order to find the optimal value of all model parameters) based on neural-network techniques, in the same spirit of \cite{Avellaneda2021}, is currently under investigation. The methodology proposed has been applied to the case of VIX futures and VXX ETN, but the framework is flexible enough to tackle also more general ETP's. Last but not least, the model is based on a simple diffusive dynamics for the VIX and it can be easily extended in order to include jumps, regime switching and other features that could be required for describing the stylized facts of this market.

\bibliography{biblio}

\begin{thebibliography}{}

\bibitem[Alexander and Korovilas, 2013]{alexander2013volatility}
Alexander, C. and Korovilas, D. (2013).
\newblock Volatility exchange-traded notes: curse or cure?
\newblock {\em The Journal of Alternative Investments}, 16(2):52--70.

\bibitem[Avellaneda et~al., 2021]{Avellaneda2021}
Avellaneda, M., Li, T.~N., Papanicolaou, A., and Wang, G. (2021).
\newblock Trading signals in {VIX} futures.
\newblock Preprint arXiv:2103.02016.

\bibitem[Bao et~al., 2012]{bao2012pricing}
Bao, Q., Li, S., and Gong, D. (2012).
\newblock Pricing {VXX} option with default risk and positive volatility skew.
\newblock {\em European Journal of Operational Research}, 223(1):246--255.

\bibitem[Bergomi, 2015]{bergomi2015stochastic}
Bergomi, L. (2015).
\newblock {\em Stochastic volatility modeling}.
\newblock CRC press.

\bibitem[Cont and Kokholm, 2013]{cont2013consistent}
Cont, R. and Kokholm, T. (2013).
\newblock A consistent pricing model for index options and volatility
  derivatives.
\newblock {\em Mathematical Finance}, 23(2):248--274.

\bibitem[Da~Fonseca et~al., 2007]{daf07}
Da~Fonseca, J., Grasselli, M., and Tebaldi, C. (2007).
\newblock Option pricing when correlations are stochastic: an analytical
  framework.
\newblock {\em Review of Derivatives Research}, 10(2):151--180.

\bibitem[Da~Fonseca et~al., 2008]{daf08}
Da~Fonseca, J., Grasselli, M., and Tebaldi, C. (2008).
\newblock A multifactor volatility {Heston} model.
\newblock {\em Quantitative Finance}, 8(6):591--604.

\bibitem[De~Marco and Henry-Labord{\`e}re, 2015]{de2015linking}
De~Marco, S. and Henry-Labord{\`e}re, P. (2015).
\newblock Linking vanillas and {VIX} options: a constrained martingale optimal
  transport problem.
\newblock {\em SIAM Journal on Financial Mathematics}, 6(1):1171--1194.

\bibitem[Derman and Kani, 1994]{derman1994riding}
Derman, E. and Kani, I. (1994).
\newblock Riding on a smile.
\newblock {\em Risk Magazine}, 7(2):32--39.

\bibitem[Drimus and Farkas, 2013]{Drimus2013}
Drimus, G. and Farkas, W. (2013).
\newblock Local volatility of volatility for the {VIX} market.
\newblock {\em Review of Derivatives Research}, 3(16):267--293.

\bibitem[Dupire, 1994]{dupire1994pricing}
Dupire, B. (1994).
\newblock Pricing with a smile.
\newblock {\em Risk Magazine}, 7(1):18--20.

\bibitem[Gatheral, 2011]{gatheral2011volatility}
Gatheral, J. (2011).
\newblock {\em The volatility surface: a practitioner's guide}.
\newblock John Wiley \& Sons.

\bibitem[Gatheral et~al., 2020]{gatheral2020quadratic}
Gatheral, J., Jusselin, P., and Rosenbaum, M. (2020).
\newblock The quadratic rough {Heston} model and the joint {S\&P500/VIX} smile
  calibration problem.
\newblock {\em Risk Magazine}, 5.

\bibitem[Gehricke and Zhang, 2018]{gehricke2018modeling}
Gehricke, S.~A. and Zhang, J.~E. (2018).
\newblock Modeling {VXX}.
\newblock {\em Journal of Futures Markets}, 38(8):958--976.

\bibitem[Gehricke and Zhang, 2020]{gehricke2020implied}
Gehricke, S.~A. and Zhang, J.~E. (2020).
\newblock The implied volatility smirk in the {VXX} options market.
\newblock {\em Applied Economics}, 52(8):769--788.

\bibitem[Grasselli and Wagalath, 2020]{grasselli2018vix}
Grasselli, M. and Wagalath, L. (2020).
\newblock {VIX} vs {VXX}: A joint analytical framework.
\newblock {\em International Journal of Theoretical and Applied Finance},
  23(5):1--39.

\bibitem[Guyon, 2017]{Guyon2017}
Guyon, J. (2017).
\newblock Calibration of local correlation models to basket smiles.
\newblock {\em Journal of Computational Finance}, 21(1):1--51.

\bibitem[Guyon, 2020a]{guyon2020inversion}
Guyon, J. (2020a).
\newblock Inversion of convex ordering in the {VIX} market.
\newblock {\em Quantitative Finance}, 20(10):1597--1623.

\bibitem[Guyon, 2020b]{guyon2020joint}
Guyon, J. (2020b).
\newblock The joint {S\&P500/VIX} smile calibration puzzle solved.
\newblock {\em Risk Magazine}, 4.

\bibitem[Guyon and Henry-Labord{\`e}re, 2012]{guyon2012being}
Guyon, J. and Henry-Labord{\`e}re, P. (2012).
\newblock Being particular about calibration.
\newblock {\em Risk Magazine}, 25(1):88--93.

\bibitem[Guyon and Henry-Labord{\`e}re, 2013]{guyon2013nonlinear}
Guyon, J. and Henry-Labord{\`e}re, P. (2013).
\newblock {\em Nonlinear option pricing}.
\newblock CRC Press.

\bibitem[Gy{\"o}ngy, 1986]{gyongy1986mimicking}
Gy{\"o}ngy, I. (1986).
\newblock Mimicking the one-dimensional marginal distributions of processes
  having an {It{\^o}} differential.
\newblock {\em Probability theory and related fields}, 71(4):501--516.

\bibitem[Heston, 1993]{heston1993closed}
Heston, S.~L. (1993).
\newblock A closed-form solution for options with stochastic volatility with
  applications to bond and currency options.
\newblock {\em The review of financial studies}, 6(2):327--343.

\bibitem[Jex et~al., 1999]{Jex1999}
Jex, M., Henderson, R., and Wang, D. (1999).
\newblock Pricing exotics under the smile.
\newblock {\em Risk Magazine}, 12(11):72--75.

\bibitem[Lin, 2013]{lin2013vix}
Lin, Y.-N. (2013).
\newblock {VIX} option pricing and {CBOE VIX} term structure: A new methodology
  for volatility derivatives valuation.
\newblock {\em Journal of Banking \& Finance}, 37(11):4432--4446.

\bibitem[Lipton, 2002]{Lipton2002}
Lipton, A. (2002).
\newblock The vol smile problem.
\newblock {\em Risk Magazine}, 15(2):61--66.

\bibitem[Moreni and Pallavicini, 2017]{moreni2017derivative}
Moreni, N. and Pallavicini, A. (2017).
\newblock Derivative pricing with collateralization and {FX} market
  dislocations.
\newblock {\em International Journal of Theoretical and Applied Finance},
  20(6):1--27.

\bibitem[Nastasi et~al., 2020]{nastasi2020smile}
Nastasi, E., Pallavicini, A., and Sartorelli, G. (2020).
\newblock Smile modeling in commodity markets.
\newblock {\em International Journal of Theoretical and Applied Finance},
  23(3):1--28.

\bibitem[Pagan and Schwert, 1990]{pagan1990alternative}
Pagan, A.~R. and Schwert, G.~W. (1990).
\newblock Alternative models for conditional stock volatility.
\newblock {\em Journal of econometrics}, 45(1--2):267--290.

\bibitem[Rebonato, 1999]{rebonato1999volatility}
Rebonato, R. (1999).
\newblock {\em Volatility and correlation: In the pricing of equity, {FX} and
  interest-rate options}.
\newblock John Wiley \& Sons.

\bibitem[Ren et~al., 2007]{Ren2007}
Ren, Y., Madan, D., and Qian~Qian, M. (2007).
\newblock Calibrating and pricing with embedded local volatility models.
\newblock {\em Risk Magazine}, 20(9):138--143.

\bibitem[Schwert, 1990]{schwert1990stock}
Schwert, G.~W. (1990).
\newblock Stock returns and real activity: A century of evidence.
\newblock {\em The Journal of Finance}, 45(4):1237--1257.

\bibitem[Schwert, 2011]{schwert2011stock}
Schwert, G.~W. (2011).
\newblock Stock volatility during the recent financial crisis.
\newblock {\em European Financial Management}, 17(5):789--805.

\bibitem[Van~der Stoep et~al., 2014]{van2014heston}
Van~der Stoep, A.~W., Grzelak, L.~A., and Oosterlee, C.~W. (2014).
\newblock The {Heston} stochastic-local volatility model: efficient {Monte
  Carlo} simulation.
\newblock {\em International Journal of Theoretical and Applied Finance},
  17(7):1--30.

\bibitem[Whaley, 1993]{whaley1993derivatives}
Whaley, R.~E. (1993).
\newblock Derivatives on market volatility: Hedging tools long overdue.
\newblock {\em The journal of Derivatives}, 1(1):71--84.

\end{thebibliography}
\bibliographystyle{apalike}

\appendix
\section{Alternative specification for futures volatilities}
\label{appendix} 

Here, we describe an alternative specification for the futures volatility process given by Equation~\eqref{Process nu} by promoting the driving scalar process to be a matrix process.

We start with a new definition of the futures dynamics, given by Equation~\eqref{SLV model for F}, which allows us to introduce the matrix driving process.
\begin{equation*}
    dF_t^i = \trace{ \nu_t^i \cdot  dW_t },
\end{equation*}%
where $W_t$ is a standard Brownian motion with values in the space of $d\times d$ matrices, and the matrix volatility process $\nu_t^i$  is defined as
\begin{equation*}
    \nu_t^i \doteq \ell_F^i(t, F_t^i) \, \sqrt{v_t} \cdot R_i(t, V_t),
\end{equation*}%
where the driving process $v_t$ has values in the space of $d\times d$ positive-semidefinite symmetric matrices. while the local-correlation processes $R_i(t, V_t)$ has values in the space of $d\times d$ orthogonal matrices, so that $R_i(t,V_t) \cdot R_i^{\top}(t,V_t)$ is the $d\times d$ identity matrix.

We apply the  Gy{\"o}ngy Lemma in order to obtain the analogue of~\eqref{leverage function}, and thanks to the fact that for any $d\times d$ matrix $A,B,$ we have that $\trace{A \cdot dW_t} \trace{B \cdot dW_t} = \trace{A \cdot B^{\top}} \,dt$, we get
\begin{equation*}
    \ell_F^i(t,K) = \frac{ \eta_F^i(t,K) }{ \sqrt{\ExC{0}{\trace{v_t}}{ F_t^i = K }} }.
\end{equation*}
On the other hand, for the ETP strategy, we obtain
\begin{align*}
    \eta_V^2(t,K)
    &= \ExC{0}{ \sum_{ij=1}^N \hat{\omega_t}^i \hat{\omega}_t^j \trace{ \nu_t^i \cdot (\nu_t^j)^{\top} } }{ V_t=K },  \\
    &= \ExC{0}{ \sum_{ij=1}^N \hat{\omega}_t^i \hat{\omega}_t^j \ell_F^i(t,F_t^i) \ell_F^j(t,F_t^j) \trace{ \sqrt{v_t} \cdot R_i(t, V_t) \cdot R_j^{\top}(t, V_t) \cdot \sqrt{v_t} } }{ V_t=K }\\
    &= \sum_{ij=1}^N \trace{ R_i(t, K) \cdot R_j^{\top}(t, K) \cdot \ExC{0}{ \hat{\omega}_t^i \hat{\omega}_t^j \ell_F^i(t,F_t^i) \ell_F^j(t,F_t^j) \, v_t }{ V_t=K } }.
\end{align*}

We now focus on the VIX/VXX case and we select a suitable definition of the local-correlation process to recover the analogue of the result in Proposition~\ref{Prop: rho computed VXX}. We assume $d=N$, where $N$ is the number of the futures contracts, and we search for a local correlation process satisfying the following constraints: $R_i(t,K) \cdot R_i^{\top}(t,K) = 1$ and $\trace{ R_i(t,K) \cdot R_j^{\top}(t,K) } = \rho(t,K)$, where $\rho(t,K)$ is a scalar function of time and price. Here, we build a solution for the case $d=N=2$. We define the local correlations matrices $R_i(t,K)$ with $i=1,2$ as given by
\begin{equation*}
    R_1(t,K) \doteq \begin{bmatrix} 1 & 0 \\ 0 & 1 \end{bmatrix}
    \;,\quad
    R_2(t,K) \doteq \begin{bmatrix} \rho(t, K) & - \sqrt{1 - \rho^2(t, K)} \\ \sqrt{1 - \rho^2(t, K)} & \rho(t, K) \end{bmatrix}.
\end{equation*}%
In this case we can recover the analogue of the result in Proposition~\ref{Prop: rho computed VXX}. In fact, from 
\begin{align*}
    \eta_V^2(t,K)
    & =  \ExC{0}{ \left( \hat{\omega}_t^{(1)} \ell_F^{(1)}(t,F_t^{(1)}) \right)^2 \trace{ v_t } + \left( \hat{\omega}_t^{2} \ell_F^{(2)}(t,F_t^{(2)}) \right)^2 \trace{ v_t } }{ V_t = K } \\
    & \quad + 2 \ExC{0}{ \hat{\omega}_t^{(1)} \hat{\omega}_t^{(2)} \ell_F^{(1)}(t,F_t^{(1)}) \ell_F^{(2)}(t,F_t^{(2)}) \, \rho(t,V_t) \,\trace{ v_t } }{ V_t = K },
\end{align*}%
we can define 
\begin{equation*}
     A^{(1)}(t,K) \doteq \ExC{0}{ \left( \hat{\omega}_t^{(1)} \ell_F^{(1)}(t,F_t^{(1)}) \right)^2 \trace{ v_t } }{ V_t = K },
 \end{equation*}
\begin{equation*}
     A^{(2)}(t,K) \doteq \ExC{0}{ \left( \hat{\omega}_t^{(2)} \ell_F^{(2)}(t,F_t^{(2)}) \right)^2 \trace{ v_t } }{ V_t = K },
\end{equation*}
\begin{equation*}
      A^{(12)}(t,K) \doteq \ExC{0}{  \hat{\omega}_t^{(1)} \hat{\omega}_t^{(2)} \ell_F^{(1)}(t,F_t^{(1)}) \ell_F^{(2)}(t,F_t^{(2)}) \,\trace{ v_t } }{ V_t = K },
 \end{equation*}
and we get the analogue of \eqref{Rho equation}, namely
\begin{equation*}
      \rho(t,K) = \frac{\eta_V^2(t, K) - A^{(1)}(t, K ) - A^{(2)}(t, K)}{2 A^{(12)}(t, K)}.    
\end{equation*}
Of course, a matrix specification for $R_i(t,  V_t)$ leads to a richer (though less parsimonious) correlation structure.

\end{document}